\documentclass[sigplan, screen]{acmart}
\settopmatter{printfolios=false,printccs=false,printacmref=false}

\setcopyright{none}

\bibliographystyle{ACM-Reference-Format}


\usepackage[normalem]{ulem}
\usepackage{verbatim}
\usepackage{xcolor}
\usepackage[T1]{fontenc}
\usepackage{epsfig,endnotes,footnote}
\usepackage{array,multirow}
\usepackage{amsmath}
\usepackage{enumitem}
\usepackage{booktabs}
\usepackage{xspace}
\usepackage{relsize}
\usepackage{hyphenat}
\usepackage{algorithm}
\usepackage{algorithmicx}
\usepackage{algpseudocode}
\usepackage{rotating}
\usepackage{graphicx}
\usepackage{subfig}
\DeclareGraphicsExtensions{.pdf,.jpg,.png}
\usepackage{multirow}

\usepackage{url}
\usepackage{microtype}



\usepackage{tabularx}
\usepackage{mathpartir}

\hypersetup{
    bookmarks=true,         
    unicode=false,          
    pdftoolbar=true,        
    pdfmenubar=true,        
    pdffitwindow=true,      
    pdftitle={},    
    pdfauthor={},     
    pdfsubject={},   
    pdfnewwindow=true,      
    pdfkeywords={}, 
    colorlinks=true,       
    linkcolor=blue,          
    citecolor=blue,        
    filecolor=magenta,      
    urlcolor=cyan           
}
\newcommand{\system}{Ocelot\xspace}
\newcommand{\sys}{Ocelot\xspace}
\newcommand{\m}[1]{\mathsf{#1}}
\newcommand{\mt}[1]{\mathit{#1}}
\newcommand{\ft}[1]{\textit{#1}}
\newcommand{\paragraphb}[1]{\vspace{0.05in}\noindent{\bf #1}\xspace}
\newcommand{\bnfdef}{::=}
\newcommand{\bnfalt}{\,|\,}
\newcommand{\rulename}[1]{\textsc{#1}}
\newcommand{\ifthen}[3]{\m{if}\ #1\ \m{then}\ #2\ \m{else}\ #3}

\newcommand{\elet}[3]{\m{let}\ #1 = #2 ~\m{in}\ #3}


\newcommand{\fresh}{$\m{Fresh}$\xspace}
\newcommand{\consistent}{$\m{Consistent}$\xspace}

\newcommand{\bop}{\odot}
\newcommand{\uop}{\oslash}

\newcommand{\astart}[1]{\m{start_{atom}(#1)}}
\newcommand{\aend}{\m{end_{atom}}}
\newcommand{\atomic}[2]{\astart{#1};#2;\aend}


\usepackage{eqparbox}
\newdimen{\algindent}
\setlength\algindent{1.5em} 
\algnewcommand\LeftComment[2]{%
\hspace{#1\algindent}$\triangleright$ \eqparbox{COMMENT}{#2} \hfill %
}

\newcommand{\alg}[1]{\ensuremath{\mathit{#1}}}


\newcommand{\timestamp}{\tau}
\newcommand{\context}{\kappa}

\newcommand{\actx}{\kappa_\mt{atom}}
\newcommand{\jctx}{\kappa_\mt{jit}}
\newcommand{\nvmem}{N}
\newcommand{\vmem}{V}

\newcommand{\cmd}{c}

\newcommand{\depth}{\mt{n_{atom}}}


\newcommand{\stepsto}{\longrightarrow}
\newcommand{\Stepsto}[1]{\stackrel{#1}{\Longrightarrow}}
\newcommand{\SeqStepsto}[1]{\stackrel{#1}{\longrightarrow}}

\newcommand{\MSeqStepsto}[1]{\stackrel{#1}{\longrightarrow^*}}




\newcommand{\ulog}{\mathcal{L}}





\newcommand{\fdecls}{\textit{FD}\xspace}
\newcommand{\pdecls}{\textit{PD}\xspace}
\newcommand{\fsums}{\textit{FS}\xspace}
\newcommand{\pmap}{\textit{PM}\xspace}
\newcommand{\pol}{\textit{pol}\xspace}
\newcommand{\instr}{\iota}
\newcommand{\aid}{\textit{aID}\xspace}
\newcommand{\pid}{\textit{pID}\xspace}

\newcommand{\lgvec}[1]{\overrightarrow{#1}}

\newcommand{\hole}{[\,]}
\newcommand{\prov}{\rho}
\newcommand{\provs}{\mathcal{P}}
\newcommand{\ins}{\mathcal{I}}
\newcommand{\cc}{\textit{cc}}
\newcommand{\notes}[2]{} 
\newcommand{\limin}[1]{\notes{magenta}{Limin says: #1}}
\newcommand{\todo}[1]{\notes{red}{TODO: #1}}
\newcommand{\milijana}[1]{\notes{blue}{Milijana says: #1}}

\newcommand{\extra}[1]{}

\newif\ifproofs
\proofstrue

\newtheorem{thm}{Theorem}
\newtheorem{lem}[thm]{Lemma}

\newtheorem{defn}[thm]{Definition}

\graphicspath{{graphics/}}

\begin{document}

\title{
    Automatically Enforcing Fresh and Consistent Inputs in Intermittent Systems}

\date{}

\author{Milijana Surbatovich}

\affiliation{
  \institution{Carnegie Mellon University}            
  \city{Pittsburgh}
  \state{PA}
  \country{USA}                    
}
\email{milijans@andrew.cmu.edu}          

\author{Limin Jia}

\affiliation{
  \institution{Carnegie Mellon University}            
  \city{Pittsburgh}
  \state{PA}
  \country{USA}                    
}
\email{liminjia@andrew.cmu.edu}          

\author{Brandon Lucia}

\affiliation{
  \institution{Carnegie Mellon University}            
  \city{Pittsburgh}
  \state{PA}
  \country{USA}                    
}
\email{blucia@andrew.cmu.edu}          

\begin{abstract}
    Intermittently powered energy-harvesting devices enable new
 applications in inaccessible environments. Program executions must be
 robust to unpredictable power failures, introducing new challenges in
 programmability and correctness. One hard problem is that input
 operations have implicit constraints, embedded in the behavior of
 continuously powered executions, on \emph{when} input values can be
 collected and used.
This paper aims to develop a formal framework for enforcing these constraints.
We identify two key
properties---\emph{freshness} (i.e., uses of inputs must satisfy the
same time constraints as in continuous executions) and
\emph{temporal consistency} (i.e., the collection of a set of inputs must satisfy 
the same time constraints as in continuous executions).
We formalize these properties and show that they can be enforced using \emph{atomic}
regions.  We develop Ocelot, an LLVM-based analysis and transformation
tool targeting Rust, to enforce these properties automatically.  Ocelot provides the
programmer with annotations to express these constraints
and infers atomic region placement in a program to 
satisfy them. We then formalize Ocelot's
design and show that Ocelot generates correct programs with little
performance cost or code changes.

    \end{abstract}

\begin{CCSXML}
      <ccs2012>
         <concept>
             <concept_id>10010520.10010553.10010562.10010564</concept_id>
             <concept_desc>Computer systems organization~Embedded software</concept_desc>
             <concept_significance>500</concept_significance>
             </concept>
         <concept>
             <concept_id>10011007.10011006.10011041.10011047</concept_id>
             <concept_desc>Software and its engineering~Source code generation</concept_desc>
             <concept_significance>300</concept_significance>
             </concept>
</ccs2012>
\end{CCSXML}
      
\ccsdesc[500]{Computer systems organization~Embedded software}
\ccsdesc[300]{Software and its engineering~Source code generation}
\keywords{intermittent computing, energy harvesting, timeliness}  

\maketitle
\thispagestyle{empty}

\section{Introduction}
\todo{Tweaks to emphasize expressing correctness}
Energy-harvesting computer systems collect their operating energy from the
environment, enabling autonomous operation over long periods of time without the
need for battery maintenance.  The key challenge of
energy-harvesting systems is that power fails if there is insufficient energy
to harvest. 
When an energy-harvesting system runs software,
a power interruption may impede forward progress~\cite{mementos,quickrecall},
leave memory state inconsistent~\cite{dino,ratchet}, leave I/O state~\cite{sytare,restop} 
or data~\cite{ibis,formal-foundations} inconsistent with execution state, or leave
I/O data inconsistent with a device's environment~\cite{mayfly,tics}. 

Intermittent execution~\cite{dino} of software enables sophisticated
computation on energy-harvesting systems, leveraging
tightly integrated non-volatile memory to retain
state across failures. There are 
many 
approaches to  address the software reliability challenges of
intermittent computing.  Most prior efforts 
focus primarily on problems related to progress and memory consistency.  To save
state, these techniques rely on in-code checkpointing (or
tasks)~\cite{mementos,dino,alpaca,chain,ratchet,chinchilla,ink,coati}, or rely on a
dynamic ``just-in-time'' (JIT) checkpointing
mechanism~\cite{hibernus,hibernusplusplus,samoyed,quickrecall,idetic,
catnap,forgetfailure,reliable-time} that
captures a snapshot of volatile state just before power fails. 

Most intermittent computing happens on sensor-enabled devices destined
for deeply-embedded deployment, where I/O drives the computation.
Fortunately, recent work has begun investigating the unique challenges of
I/O in intermittent systems. 
Some work ensures the basic, correct operation of peripherals and their drivers
across power failures~\cite{sytare,restop,samoyed,karma, adp, auto-io}, avoiding crashes, hangs,
and driver state corruption.  Other work addresses subtle
interactions between I/O and checkpointing that lead to data corruption~\cite{ibis,formal-foundations}. 
These efforts 
enable correct basic operation of I/O devices in an intermittent execution.
Operating in the real world, however, places correctness requirements on an
intermittent system that go beyond ensuring that drivers and data 
remain uncorrupted. 

Unlike a continuously-powered execution, an intermittent execution may violate
implicit constraints on {\em when} inputs should be collected and used, due
to the unpredictable time spent recharging after a power failure.
An intermittent execution may use an input that is too old (i.e., stale) 
if the system checkpoints after the input is collected, but power 
fails before it is used.  The need to avoid use of stale inputs is a {\em
  freshness} requirement. 
Some programs require {\em multiple} input values to be sampled together (e.g.,
a pressure and a humidity reading) so that they come from a
consistent point in time.  The need to ensure that multiple inputs are sampled
together is {\em temporal consistency}, which is violated by a 
checkpoint and power failure between these readings.

Freshness and temporal consistency belong to the broader category of {\em
timeliness} requirements on inputs. Prior work explored timely
intermittent execution~\cite{mayfly,tics,reliable-time,tardis} but lacks formally specified 
correctness conditions.  Existing approaches rely on the addition of 
hardware to track time during power failures and often
require writing extra code to mitigate the misuse of expired inputs.  
 Moreover, existing work focuses on freshness (e.g., using inputs before they expire) 
 and does little
to enforce temporal consistency.   

In this work, we introduce formal definitions of freshness and 
temporal consistency and develop \sys,
which automatically enforces 
specified timing constraints
in intermittent systems without needing timekeeping hardware. 
\sys gives the programmer constructs to specify 
what timing properties matter for their program
and enforces that specification by leveraging atomicity, 
generating programs that are correct-by-construction. 
Instead of enforcing programmer-specified expiration times, \sys enforces freshness and temporal
consistency by ensuring that an intermittent execution does what some continuous execution
would do; {\em the continuous execution is the specification of correct behaviour}.
\sys asks the programmer to express freshness and temporal consistency
requirements only and asks neither for timing specification on the collection
or use of inputs, nor for mitigation actions to handle expired inputs.  
\sys's atomic region inference algorithm then automatically inserts 
atomic regions that contain input-derived variable definitions and uses.
If power fails during an atomic region, the region re-executes (idempotently)
from the start.
Outside of an atomic region, the system defaults to a baseline intermittent
execution model (i.e., in our work, JIT checkpoints~\cite{samoyed,hibernus}).

We formalize this notion of freshness and temporal
consistency using a modeling language and investigate how to prove our design correct. 
We implement \sys for Rust using analyses built in LLVM~\cite{llvm}. We evaluate our implementation on a real energy-harvesting hardware platform~\cite{capybara} using a
collection of applications taken from prior work, and a new tire safety monitoring
application that we developed. 
Our results show that \sys effectively identifies atomic regions that enforce
both freshness and temporal consistency.  \sys imposes less than 10\% runtime overhead compared to both JIT
checkpoints and to atomic regions implementations from prior work. 
\sys demands less of the programmer, compared to two systems from prior work that
address I/O timeliness~\cite{tics,samoyed}. 
Most importantly, \sys provides a formally defined correctness criterion
for collection and use of intermittent inputs, which no prior system provides.

To summarize, the main contributions are:
\begin{itemize}
\item We provide the first formal definitions for 
{\em freshness} and {\em temporal consistency} and show that 
atomicity is sufficient to enforce these properties.

\item We develop \sys, an analysis that inserts atomic regions 
to enforce these properties without asking the programmer to think about 
real time, mitigations for timeliness failures, or added hardware.
\item We prototype \sys for Rust and use it to add atomic regions to a set 
of programs from prior work and a tire monitoring program we developed.

\item We evaluate \sys on real energy-harvesting hardware and show that its atomic regions ensure
  these properties at little runtime or programming overhead.
\end{itemize}

\section{Background and Motivation}
\label{sec:background}

Software executes intermittently on energy-harvesting systems, relying
on system support to ensure progress and memory correctness despite power
failures.  I/O complicates an intermittent system, requiring additional
correctness reasoning to ensure both correct device operation \emph{and}
the freshness and temporal consistency properties addressed by \sys.

\subsection{The Basics of Intermittent Computing}
Software executing intermittently on an energy-harvesting system makes forward
progress only as sufficient energy is available. We show this 
in the graph in Figure~\ref{fig:exec}, top.  A system collects energy
using, e.g., a solar panel or radio wave collector, storing small amounts
of energy in a tiny battery or capacitor (red segments).  After a system-specific amount of
energy accumulates, hardware activates the system to begin executing, quickly
consuming the energy (green segments).  The executing system may collect sensor inputs, 
run computations (e.g., machine learning to process sensor
data~\cite{sonic,manic,camaroptera}) on an ultra-low-power CPU or
microcontroller, and log or transmit results via a wireless radio
link.
\begin{figure}[bp]
  \centering
  \includegraphics[width=0.9\columnwidth]{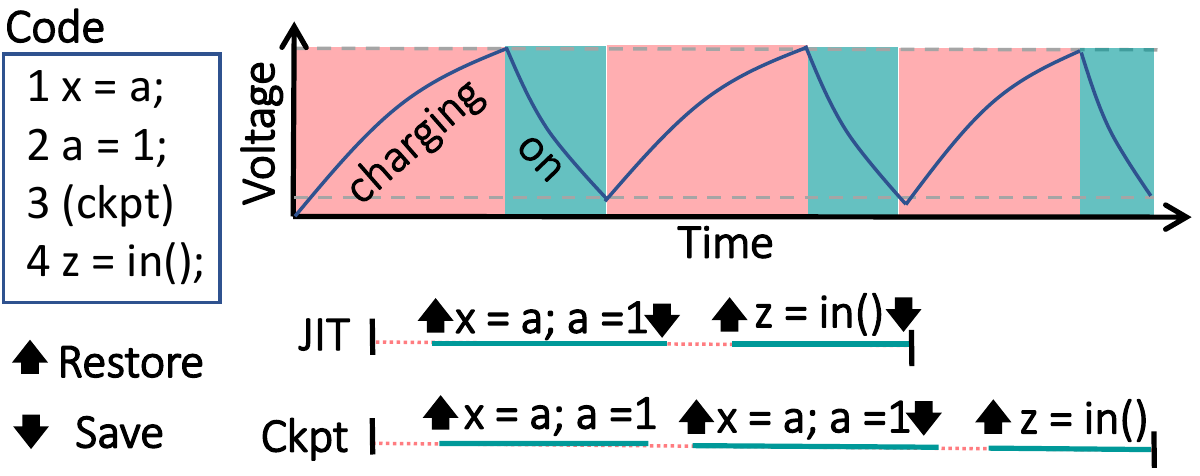}
\caption{JIT and Checkpoint based intermittent execution}
  \label{fig:exec}
\end{figure}

Prior work identified and addressed several
progress~\cite{mementos,quickrecall,idetic} and
correctness~\cite{dino,ratchet,clank,quickrecall,mementos,idetic,alpaca,chain,coati,
samoyed,hibernus,hibernusplusplus,ink,catnap}
challenges to intermittent execution.  The main idea in these works is to
ensure that non-volatile memory remains consistent as execution proceeds in
bursts.
There are two broad classes of solutions,
``just-in-time'' checkpointing systems~\cite{samoyed, hibernus,
quickrecall} and checkpoint- (or ``task-'') based
systems~\cite{dino,ratchet,alpaca,coati}. We illustrate the difference 
using the code snippet in Figure~\ref{fig:exec}.  A JIT checkpointer uses hardware to
monitor energy. The software runtime backs up volatile state (registers, stack) just before power fails.  On reboot,
the  
system restores volatile state and continues.
In the example, power fails after executing line 2. On reboot, the execution resumes from line 4.   
A checkpoint-based system encounters explicit code points where 
 it collects a checkpoint and continues executing, such as line 3.  After a power
failure, the system resumes from the last saved explicit checkpoint. If power fails 
after executing line 2, the execution restarts from line 1. 
The checkpoint saves volatile state, like JIT, but also saves some
non-volatile state to ensure that they remain consistent.  Prior work showed
that a checkpoint must back up non-volatile memory that will be
accessed after the checkpoint first by a read, then by a write, i.e., a
Write-After-Read (WAR) dependence~\cite{dino,ratchet}, i.e., {\tt a} in the 
example. 

Inputs complicate checkpointing.  
Unlike checkpoint-based systems, a JIT system
never re-executes code after reboot.  
In some cases, however, correctness {\em requires} re-executing to re-collect
an input; in such cases JIT checkpointing always renders execution incorrect.  
Prior work showed that checkpointing causes incorrect behavior if a 
value derived from an input is not correctly backed up 
~\cite{ibis,formal-foundations}.  
To avoid the incorrect behavior, a system must add to the checkpoint
conditionally-written, non-volatile data not already checkpointed due to a WAR
dependence (the ``exclusive may-write'' or EMW
set~\cite{ibis,formal-foundations}).
Even after resolving these memory consistency issues, inputs still complicate
intermittent correctness, because of input {\em timeliness
constraints}~\cite{mayfly,tics,capybara,tardis}.

\subsection{Inputs Violating Freshness and Consistency}

\begin{figure}[tbp]
  \centering
  \includegraphics[width=1.0\columnwidth]{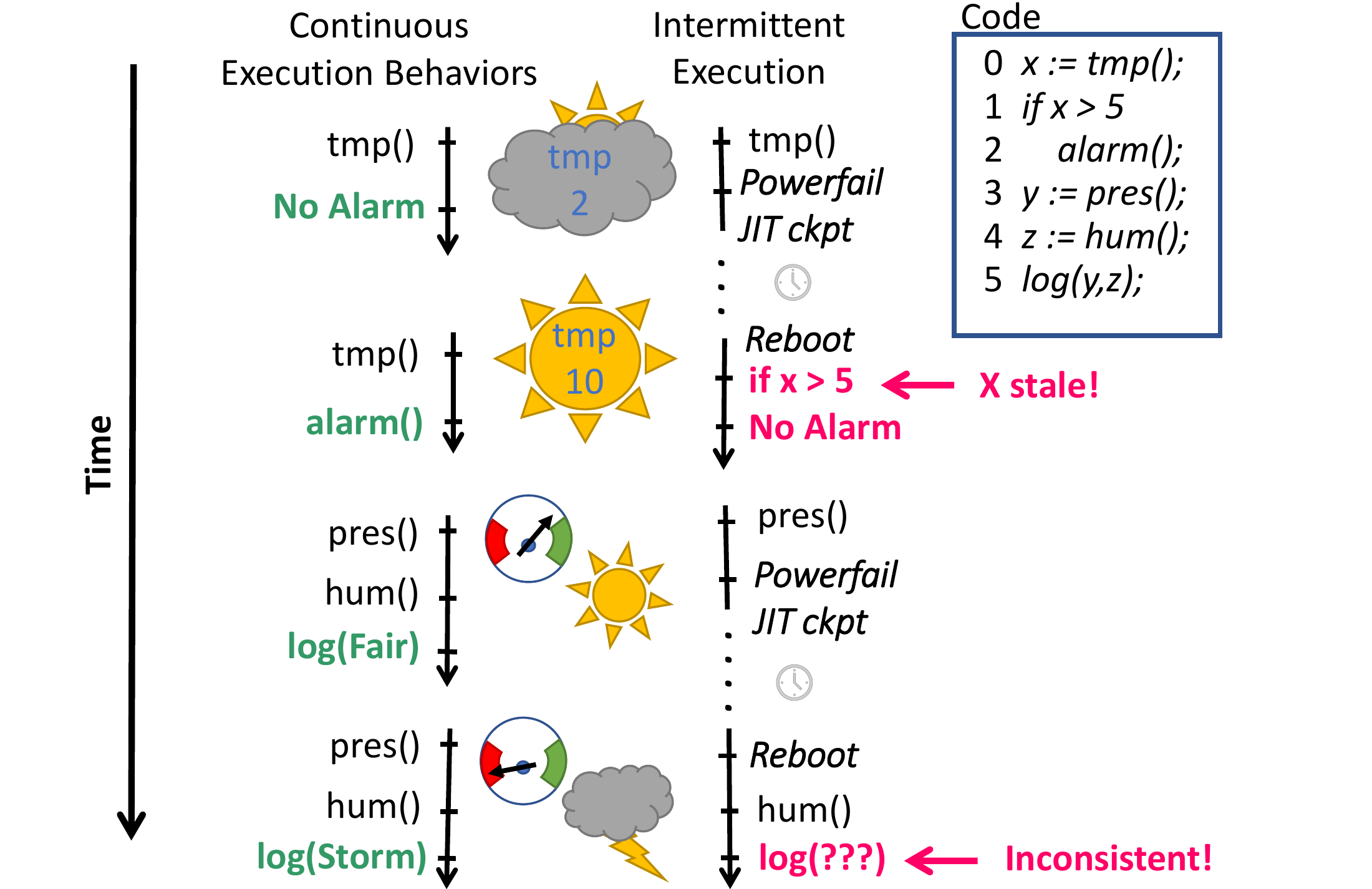}
\caption{Freshness and temporal consistency problems.
}
  \label{fig:problem}
\end{figure}

Intermittent execution can violate {\em freshness} and {\em temporal
consistency}, which are implicit correctness constraints illustrated in
Figure~\ref{fig:problem}.
The example program reads a thermometer, raising an alarm for high 
temperatures. 
The program then logs pressure and humidity sensor data that may indicate
a storm.   
Time flows down and at left are possible continuous executions, each
corresponding to the weather in the middle.  At right are intermittent
executions (assuming JIT checkpointing).
Power fails between instructions 0 and 1 and between 3 and 4, spending arbitrary
time while powered off.

\paragraphb{Violating Freshness} 
The time delay of a power failure violates data freshness, causing 
incorrect behavior if the temperature changes during the delay.
The continuous execution raises the alarm at high temperature.
The intermittent execution, however, senses cold, then checkpoints and powers off. 
On reboot, it raises no alarm, even though the temperature is high.
The code implicitly requires the use of $x$ while it is {\em fresh}, but the power failure prevents this.  
For an intermittent execution to match a continuous execution, power must not fail
between sensing the temperature and executing the branch on $x$.

\paragraphb{Violating Temporal Consistency}  
The time delay of a power failure may compromise the {\em temporal consistency} of a collection of sensor data.
With initially fair weather that becomes stormy, a continuous execution may sense 
high pressure and 
low humidity (i.e., no storm), or sense low pressure and high humidity 
(i.e., a storm), logging either condition. 
The intermittent execution, however, reads high pressure before power fails
(fair weather), and high humidity after rebooting (storm). 
The sensed values are inconsistent with the fair or stormy weather seen by continuous executions. 
For intermittent execution to match continuous execution, power must not 
fail between the pressure and humidity readings.

\subsection{Prior Approach: Timeliness}
Freshness and temporal consistency are correctness conditions on 
{\em when} data from input operations may be used, similar but distinct from 
{\em timeliness} conditions in prior work~\cite{mayfly}.  Recent work on input
timeliness requires an input value to be used within a
programmer-specified ``expiration'' window after collection~\cite{tics,reliable-time,tardis}.  
These approaches {\em add hardware} to keep track of time during power
failures.  On use of an expired value, the program must recollect the value or
treat the use as an exceptional error case and run mitigation code. 

While prior work has made progress toward the goal of timely intermittent
execution, fundamental challenges remain unaddressed.  
First, the notion of timeliness (which we
call ``freshness'') ignores important cases in which {\em two} input
values must be from the same moment in time, but have
no absolute expiration constraint. We call this timing property 
``temporal consistency'', drawing 
inspiration from data-centric concurrency control~\cite{data-centric-ceze, data-coloring,vaziri-atomic-set-serializability}.  
Temporal consistency ensures that multiple values (e.g., the pressure and
humidity readings) come from the same point in time.  

Second, prior techniques burden the programmer by requiring reasoning about
real time and demanding a distinct expiration time for each value.  If the
programmer incorrectly assigns expiration times, an execution may misbehave
without an expiration time violation. While identifying the data that 
require an expiration time may be simple, assigning the right expiration
time requires choosing the correct real time value for a given program, 
platform, and deployment, which is not simple.  
Some systems~\cite{tics,mayfly} demand more of the programmer, asking for a recovery
action for expired data.

Third, prior timeliness techniques add extra time-keeping hardware: a low-power
real-time clock~\cite{tics,mayfly} or a time-keeper based on charge
remanence~\cite{tardis,reliable-time,forgetfailure}.   The need for time-keeping
hardware precludes the adoption of these techniques on unmodified platforms. 

Fourth, and \emph{most critically}, prior approaches do not formally
define the timeliness properties they aim to provide, nor do they relate the behavior of
an intermittent execution to that of a continuous execution.  Lacking
formal definitions and correctness relations makes it difficult or
impossible to reason if a system is correct.  
A key
contribution of this work is to formally define correctness criteria in 
 relation to continuously-powered executions and
to use these definitions to develop a formalism to prove if a system is correct.

\section{\sys: Correct Inputs via Atomicity}

\sys is a compiler analysis that
inserts atomic regions into code to
enforce freshness and temporal
consistency in intermittent executions
of Rust programs.  \sys is the first
system designed to support the
development of software for
intermittently operating systems using 
Rust.

\subsection{Continuous Execution as a Correctness Spec}
\sys's correctness definitions use the idea that a continuous
program execution is implicitly a
specification of behavior that should
be allowed by an intermittent
execution, including freshness and
temporal consistency properties.
The arbitrary time that passes during a power failure can cause an intermittent
execution to operate on inputs with timing impossible in any continuous
execution, leading to incorrect behavior. 
Prior work~\cite{alpaca,chain,coati,ink} uses \emph{atomicity} of a code region
to keep memory consistent.  We show that atomicity is also
linked to freshness and temporal consistency.

An \emph{atomic region} saves memory state at its start.  If power fails during
a region, the region restores the saved state and execution continues from the
start of the region on reboot. 
A partially executed region's updates to state are not visible to an execution.
If a region completes, its effects become visible to later operations and the
region must have executed without a power failure. If a region executes 
without a power failure, i.e., atomically, its span of code will match the timings of a continuous 
execution.
If multiple input operations execute atomically, the operations
are temporally consistent.
If an input operation and user of the input value execute atomically, 
the value will be fresh when used.
\sys leverages this observation and uses atomic regions as the mechanism to enforce
 time constraints in intermittent executions. 
 Code with freshness or temporal consistency requirements executes in
an atomic region; atomicity ensures that the execution behavior
will match some continuous execution. 
Code with no such requirements executes with JIT checkpoints, enjoying  
the low overheads of taking action only when power fails.

\paragraphb{Jit + Atomic Execution Model}
\label{sec:jit-atomics}
\sys combines JIT checkpoints with atomic regions as an execution model,
 modularly working with any JIT checkpoint and atomic region implementation.
The JIT checkpoint mechanism must checkpoint
volatile memory and registers when energy is low, restoring from that checkpoint on reboot. 
The atomic region implementation must disable JIT checkpoints at the region's entry,  
instead checkpointing volatile system state 
and sufficient non-volatile state to ensure idempotent re-execution of the region~\cite{dino,ratchet,alpaca}.
\sys allows nested or overlapping regions,
flattening them into a single region with the extents of the outermost region.
We describe the implementation of the JIT and Atomic runtimes used in the
evaluation in Section~\ref{sec:runtime-impl} and show the small-step semantics
 in Appendix~\ref{app:semantics}.

\subsection{From Annotations to Correct Executables}
Relying on simple programmer-provided annotations, \sys infers atomic regions that automatically enforce a program's freshness and temporal consistency constraints.
Figure~\ref{fig:sys} illustrates \sys's workflow. 
The
programmer annotates (in blue, upper-left) which variables have freshness or consistency
constraints.  Section~\ref{sec:types} defines the precise meaning of 
these annotations.
\sys must ask programmers for annotations as freshness and temporal consistency
requirements are highly application- and deployment-dependent.
Consider the program in Figure~\ref{fig:problem}. The code logs a pair of values representing sensed pressure
and humidity at line 5.  If power fails between executing lines 4 and 5, the
values are consistent but not fresh. If power fails
between executing lines 3 and 4, the values are neither consistent nor
fresh.  The key challenge is that it is {\em implicit} 
which of temporal consistency and freshness matter for such a pair of values.

\begin{figure}[htb]
    \centering
    \includegraphics[width=1.0\columnwidth]{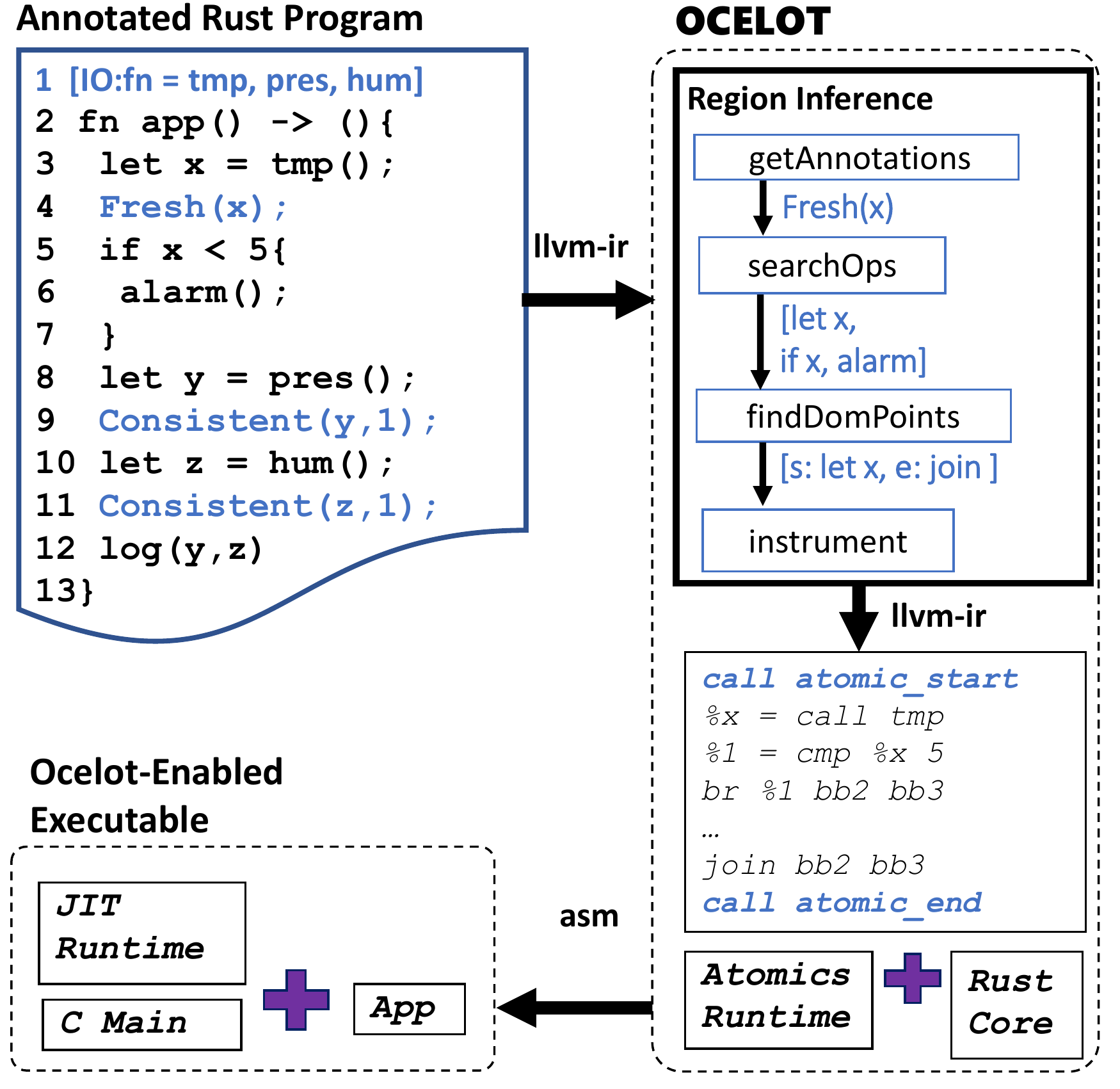}
    \caption{Visualization of the Ocelot toolchain. 
    \label{fig:sys}
    }
  \end{figure}

Ocelot uses the annotations to analyze the code and infer atomic regions that ensure
the constraints.
\sys's region inference algorithm 
searches for operations that 
must execute atomically to enforce each annotation. These operations include inputs 
that each operation depends on and each operation's uses 
of annotated 
data. The algorithm computes points that dominate all such operations, and adds 
a region enclosing those points. Section~\ref{sec:provability} describes 
the algorithm and its correctness; Section~\ref{sec:implementation} gives the implementation details.
Ocelot's compiler links the transformed code to its 
JIT checkpointing and atomic region runtime library (and application libraries), generating a correct executable.

\subsection{Benefit of Targeting Rust} 
To our knowledge, Ocelot is the first
intermittent computing toolchain to target Rust. Enabling correct
intermittent execution of Rust programs is valuable to the
community.
Further, Rust provides memory safety, which contributes to correct
intermittent execution in the following two ways. First, as energy-harvesting
devices are often deployed to inaccessible or remote 
environments, a memory-unsafe program that corrupts non-volatile memory may be difficult
or impossible to patch, making the device useless.
Second, current intermittent systems, including \sys, rely
on the \emph{soundness} of static analyses for their correctness guarantees.
These static analysis identify variables to checkpoint~\cite{dino,
  alpaca, ibis,formal-foundations} or where to place checkpoint
bounds~\cite{ratchet}.  
Pointer alias analysis is a hard problem in C. 
  Missing an alias leads to memory corruption if the compiler 
  fails to checkpoint an aliased memory location that must be checkpointed. Rust's 
ownership and immutability properties make alias analysis more
precise~\cite{crust}. 
Sections~\ref{sec:provability} and ~\ref{sec:implementation} describe how this precision benefits \sys's analyses.
Third, combining \sys with emerging formalisms and frameworks for Rust, such as Rustbelt~\cite{rustbelt, rustbelt2} and 
Iris~\cite{iris} creates a path toward fully formally verified intermittent system implementations. 
\section{Formalizing Freshness and Consistency}
\label{sec:types}

We define a simple modeling language and introduce annotations
for freshness and temporal consistency as discussed in
Section~\ref{sec:background}. Then, we define their meaning by
reference to allowed {\em correct} intermittent executions.

\subsection{A Simple Language}
\label{sec:lang}
This language includes accesses to references and arrays.
A program $p$ consists of a set of function declarations. We assume that the program starts at the $\m{main}$ function. 
We show key syntax below---the rest is in Appendix~\ref{app:syntax}.
\[
\begin{array}{llcl}
\textit{function decls} & \fdecls & \bnfdef& \cdot\bnfalt \fdecls, f(\ft{arg})= c; \m{ret}\;e
\\
\textit{commands} & \cmd & \bnfdef &
\instr \bnfalt \ifthen{e}{\cmd_1}{\cmd_2} \bnfalt \cmd_1;\cmd_2\\
&& \bnfalt &  \elet{x}{e}{\cmd}  \\
&& \bnfalt & \elet{x}{f(v)}{\cmd} \bnfalt \elet{x}{\m{IN()}}{\cmd}\\
&& \bnfalt&     \atomic{\aid, \omega}{\cmd} 
\\

\end{array}
\]
Commands include $\m{if}$
statements, sequencing, variables bindings, function calls,
input operations, and atomic regions, which are parameterized with an
ID $\aid$ and set of checkpointed locations $\omega$.
For simplicity, we assume that $\m{let}$ bound variables are 
mutable and their uses obey Rust's type system, which is the case in our benchmarks.
Commands use values $v$, which are numbers, booleans, or
references, and expressions $e$, which are variables,
values, or operations on sub-expressions. 
A command can also be an instruction
$\instr$, which includes assignments to a dereferenced variable and $\m{skip}$. 
We do not have a general
loop construct as bound loops can be unrolled to $\m{if}$ statements.
Unbounded loops do not introduce technical difficulties, but
complicate the presentation. We do not support recursive functions, which 
many intermittent systems disallow. 

The operational semantic rules
 for continuous executions are of the form: $(\tau_1, N_1, S_1, c_1)
 \SeqStepsto{} (\tau_2, N_2, S_2, c_2)$, where $\tau$ is the logic
 time stamp, $N$ is the nonvolatile memory, $S$ is the calling stack,
 and $c$ is the command to be executed. Intermittent executions are
 of the form $(\timestamp, \context, \nvmem, S, \cmd) \Stepsto{}
 (\timestamp', \context', \nvmem', S', \cmd')$, where $\context$
 is the saved execution context.
 Appendix~\ref{app:semantics} details these rules. These intermittent semantics model \sys's runtime.
 Continuously powered execution traces are
 sequences of $\SeqStepsto{}$ transitions and intermittently powered
 execution traces are sequences of $\Stepsto{}$ transitions. The
 difference is that the latter saves and restores context at
 power failure and reboots, as described in Section~\ref{sec:jit-atomics}.

\subsection{Annotations for Freshness and Consistency}

\sys introduces two annotations: $\m{Fresh}$
and $\m{Consistent}(\mt{id})$.
\[
\begin{array}{llcl}
\textit{commands} & \cmd & \bnfdef &
\cdots \bnfalt  \elet{\m{fresh}~x}{e}{\cmd}  \\
&&\bnfalt&  \m{let}~\m{consistent(n)}~x = e~\ft{in}~\cmd 
\end{array}
\]
Here, $ \m{let}\ {\m{fresh}~x}$ and $\m{let}~\m{consistent(n)}~x$ create
immutable variables.
The annotation for Rust code is shown in the first box of Figure~\ref{fig:sys}. 
On Line 4,
$\m{Fresh}(x)$, declares that any input operations $x$ could depend on
and any uses of $x$ must not be interleaved with a power
failure.
The $\m{Fresh}(x)$ annotation is violated if the input on which $x$ depends
executes before a power failure and a use of $x$ executes after that failure. 
The $\m{Consistent}$ annotation specifies temporal consistency.
The annotation associates a group of variables together into a {\em consistent
set}. For any variable in the consistent set dependent on an input operation, 
those input operations must have executed together with  
no interleaving power failures. 
The annotation takes an ID as a parameter. 
All variables annotated as $\m{Consistent}$ with the same ID are in the same
set, such as $y$ and $z$ in Figure~\ref{fig:sys}. Any input
operations that $y$ and $z$ depend on must execute together as if they were in
a continuously powered execution. 

\subsection{The Meaning of Freshness and Consistency}
\label{sec:defs}
Figure~\ref{fig:properties} illustrates 
freshness and
temporal consistency by relating the intermittent and continuously-powered
executions. 
A double arrow is an intermittent execution trace, and a solid single arrow is
a continuous execution trace. 
The vertical lines mark the transitions (steps) at operations. 
A dashed arrow denotes a 
dependence between operations (i.e., control- and data-flow). Each
operation occurs at a logical time $\tau$, which increases with
each instruction executed. 
\begin{figure}[tbhp]
  \centering
  \subfloat[{\bf  Freshness}]{
  \includegraphics[width=1.0\columnwidth]{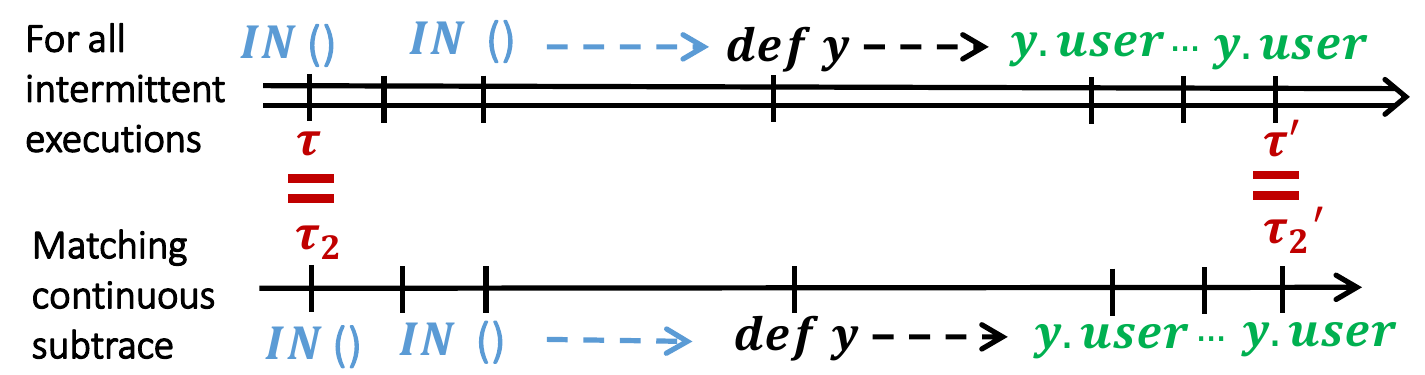}
  }

  \subfloat[{\bf  Temporal Consistency}]{
    \includegraphics[width=1.0\columnwidth]{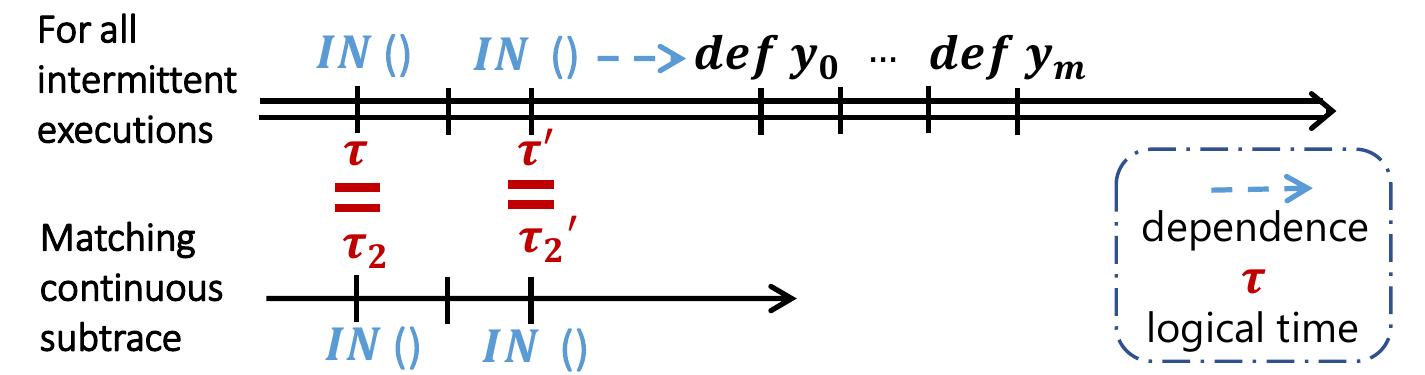}
  }
\caption{Illustrating the properties via execution traces}
  \label{fig:properties}
\end{figure}

The definition of freshness is in 
Figure~\ref{fig:properties}~(a). An intermittent 
system ensures that a variable $y$ is fresh if, for
\emph{all} intermittent execution traces that include
input operations on which $y$ depends (in blue), the definition of
$y$ (in black), and dependents of $y$ (in green), there \emph{exists}
a possible continuous execution of the program that has the same
sequence of operations from the first input to the final dependent
operation.  Furthermore, the time span between the first input to
final dependence on the intermittent execution---marked by the red
timestamps---must match that of this continuous execution.  In this
illustration, a user of $y$ is  
any instruction or command using an expression $e$ where 
one of the terms of $e$ is $y$.
Consider 
a power failure
between $y$'s definition and its first use. 
A JIT checkpointing execution resumes from that point on reboot, but 
after an arbitrary period of time.
The freshness property does not hold: there is no continuous execution with the same operation sequence \emph{and} times. 

The definition of temporal consistency is 
in Figure~\ref{fig:properties}~(b). A set of variables ${y_0 \ldots y_m}$ 
is consistent if, for all intermittent traces with a set
of input operations that $y_i$ depends on (in blue), there exists a
continuous execution with the same sequence of input operations and the
same time difference between the first and last input
operation. 
A power failure between 
input operations violates the property: an arbitrary duration may pass 
during power failure, and no continuous execution
could have the same time difference between operations. 
The definitions of $y_i$ do \emph{not} need to be in
an intermittent subtrace matching a continuous trace for temporal
consistency to hold.

Formal definitions are in Appendix~\ref{app:correctness-defs}.  The key
is to augment the semantics  with taint tracking and store the
input dependency information in memory so we can identify the
input operations on which an annotated variable depends.

\section{Ocelot Design}
\label{sec:provability}

\sys's design generates programs that satisfy freshness
and consistency constraints and we describe how to prove the correctness of our \sys design. 

\subsection{\sys Components}

\sys has two key components to generate programs that are 
 correct-by-construction.  
First, given an annotated program, \sys needs to identify
the instructions that are relevant to each annotation; 
we call this record of an  annotation and relevant instructions 
a \emph{policy}. To construct a policy, \sys must identify the inputs on which an
annotated variable depends, and the uses of any fresh
variable.  \sys constructs a policy using a static taint analysis to track data
and control flow originating at input operations, and builds a taint summary for each
function. Second, given a set of constructed policies, 
\sys adds atomic regions to the program so that all
instructions in a policy are within a single atomic region. To add an atomic region for a policy, \sys
identifies each program point that dominates all instructions in the policy
and inserts the start of an atomic region at those points.  The analysis inserts the end of the atomic region after the last of the instructions in the policy.

\begin{figure*}[t!]
  \centering
\(\small
\begin{array}{llcl@{\quad}llcl}
  \textit{provenance} & \prov & \bnfdef & \m{nil}
  \bnfalt (f_1, \ell_1)::\prov
  &
  \textit{policy} & \pol & \bnfdef &
  \m{fresh}(\ft{decl}:(f, \ell),
  \ft{inputs}: \provs
  \ft{uses}: \lgvec{(f_1, \ell_1)})
  \\
  \textit{policy decls} & \pdecls & \bnfdef& \cdot\bnfalt \pdecls, \pid\mapsto \ft{pol}
  &
  && \bnfalt & \m{consistent}
  (\ft{decls}: \lgvec{(f_1, \ell_1)}, \ft{inputs}:\vec{\prov})
  \\
  \textit{policy map} & \pmap & \bnfdef& \cdot\bnfalt \aid\mapsto \lgvec{\pid}
  &
  \textit{type of taint} & \ft{fromtp} & \bnfdef &
  \m{local}(\ell)\bnfalt \m{retBy}(f, \ell) \bnfalt \m{pbr}(f, \ell) \bnfalt \m{argBy}(\ft{fromtp})
  \\
  \textit{taint map} & \ft{tmap} & \bnfdef &
  \m{ret} \hookleftarrow \ft{inInfo}
   &
  \textit{Inputs} & \ft{inInfo} & \bnfdef &   \emptyset \bnfalt \ft{inInfo}, (\ft{input}:(f,\ell), \ft{fromTp}:\ft{fromtp}) 
  \\
  && \bnfalt &   \m{\&arg} \hookleftarrow \ft{inInfo}
  &   
  \textit{local sum.} & \ft{lSum} & \bnfdef &
  \m{local}(\lgvec{\ft{tmap}})
  \\
  && \bnfalt &  \m{arg} \hookleftarrow \ft{inInfo}
  &
  \textit{caller sum.} & \ft{CSum} & \bnfdef & 
  \m{call}(caller:(f,\ell), \lgvec{tmap})
  \\
  \textit{func sum.} & \ft{fsum} & \bnfdef &\lgvec{lSum}, \lgvec{CSum}
  &
  \textit{func sums.} & \ft{FS} & \bnfdef & \cdot\bnfalt \ft{FS}, f\mapsto \ft{fsum}
\end{array}
\)
\vspace{-1em}
\caption{Syntax for policies and taint maps}
\label{fig:pol-syntax}
\end{figure*}

We formalize policies and summaries of input dependence in
Figure~\ref{fig:pol-syntax}.  We assume that each instruction inside a
function is given a unique label; consequently, a function
name and label pair uniquely identifies an instruction.  To be context
sensitive, we use provenance, the sequence of calls ending in
an input operation, to
distinguish different calls to the same input operation (example to
follow).  A freshness policy is a record containing the declaration, a
list of input operations and their provenance, and a list of uses. A
temporal consistency policy contains a list of declarations and a list
of input operations and their provenance.

The purpose of provenance information is to disambiguate multiple
calls to the same function in a policy. We show an example in
Figure~\ref{fig:impl-examples}~(b).  The main function $\mt{app}$
calls $\mt{confirm}$. $\mt{confirm}$ calls the pressure sensor twice
consistently. 
Both calls to $\mt{pres}$ must occur in the same
atomic region.
To reflect this in the policy declaration, 
each input is associated with its call chain (indicated in
purple) to distinguish the same input with
different calling contexts. 

\begin{figure}[htbp]
  \centering
  \includegraphics[width=1.0\columnwidth]{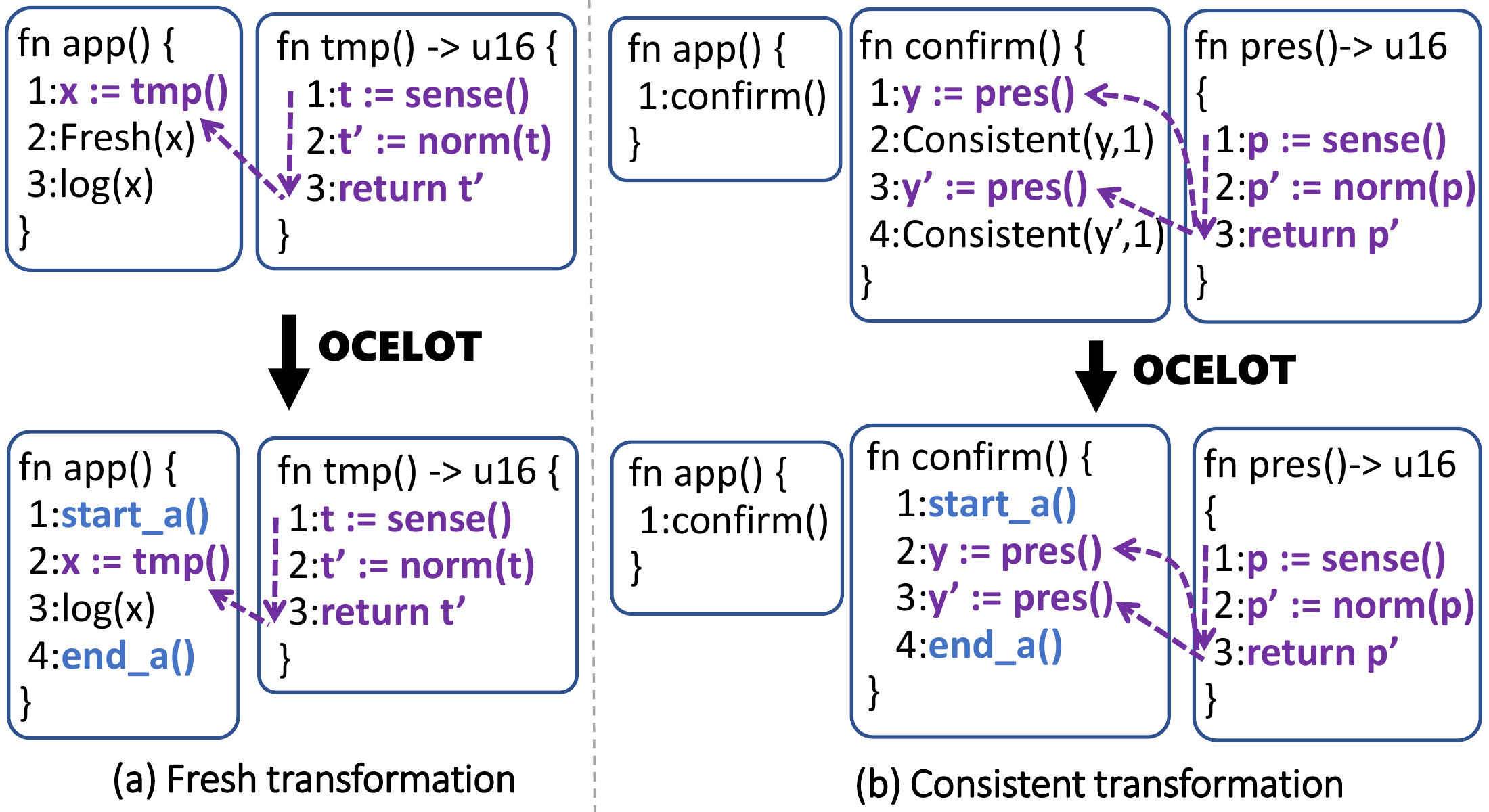}
\caption{Policies for longer call chains}
  \label{fig:impl-examples}
\end{figure}

To present the results of both components, we define policy
declarations \pdecls, which map policy IDs to policies; a policy
map \pmap, which maps atomic region IDs to policies that it enforces;
and function summaries \ft{fsum}, which are
lists of local and caller summaries.
A function summary contains a
taint map entry, which is a link in a call chain describing
 how tainted information flowed into and out of the
function. The entry records if taint flows through the
return ($\m{retBy}(f, \ell)$), into a pass-by-reference parameter ($\m{pbr}(f, \ell)$),
 or is passed in by an
argument ($\m{argBy}(\ft{fromtp})$).  A local summary \ft{lsum} is used if the taint
 was generated within the function, in which case taint
flows to any caller.  For example, input is generated within the
function $\mt{pres}$ and passed through the return,
so $\mt{pres}$ has a local
summary
$\m{local}(\m{ret} \hookleftarrow (\ft{input}{:}\ft{(sense, 0)},
\ft{fromTp}{:}\m{local}(1))$.  A caller summary
\ft{CSum} is used when taint was passed in, in which case taint
flows back only to that calling context. For example, $\mt{norm}$ is called
by $\mt{pres}$ with a tainted argument which flows to the return, so
$\mt{norm}$ has a summary including the taint map
$\m{call}(
caller:(\ft{pres},2),
\m{arg} \hookleftarrow (\ft{input}{:}\ft{(sense, 1)}, \ft{fromTp}{:}\m{local}(1))),
\m{ret} \hookleftarrow (\ft{input}{:}\ft{(sense, 1)}, \ft{fromTp}{:}\m{argBy}(\ft{(pres, 2)}))
)$. Linking taint map entries uncovers 
the entire provenance. 

The two components of \sys are:
{\sc buildSummary}(\fdecls) = (\textit{FS}, \pdecls) and
{\sc inferAtomic}(\fdecls, \ft{FS}, \pdecls) = (\pmap, \fdecls'). We show more 
implementation details in Algorithm~\ref{alg:infer} in Section~\ref{sec:implementation}.

\subsection{Sanity Checks for Results}
\label{sec:check}

Instead of directly proving the algorithms correct, we show a set of
sanity checking rules for the results and prove that programs that
pass these checks can be executed correctly intermittently.  These
rules resemble how the algorithms work and can additionally serve as a
validation tool.

\paragraphb{Checking Summary and Policy Declarations}
We first check that a function summary is correct and that the correct
sets of operations are in policy declarations.  The judgment is of the 
form: $\fdecls;\pdecls,\fsums; (g, \ell); f; M; I \Vdash \cmd : M'; I'$.
$\fsums$ is the summary for all functions.
We are checking the summary for when $f$ is called from $g$ on line $\ell$. 
$M$ and $I$ denote the may-alias and
input-dependence information prior to executing $\cmd$. $M'$ and $I'$
are updated with any may-alias and input dependence information from
$\cmd$.

\begin{mathpar}
  \small
    \mprset{flushleft}
\inferrule[Call-nr]{
    v~\mbox{not a ref.}\\
    \ft{checkUse}(\pdecls, v)\\
    \ft{ins} = I(v) \\
    \fsums(g) = s\\
    \ft{ins} \subseteq s(\m{call}, f, \ell,\m{arg})\\
    \ft{outs} = s(\m{local}, \m{ret}) \cup s(\m{call}, f, \ell, \m{ret})\\
    \ft{outs'} = \ft{outs}[\ft{fromTp}\mapsto \m{retBy}(g, \ell)] \\
    \fdecls;\pdecls,\fsums; c; f; M; I\cup(x \hookleftarrow \ft{outs'})
    \Vdash \cmd: M'; I'
  }{ \fdecls;\pdecls,\fsums; c; f; M; I
    \Vdash  \ell:\elet{x}{g(v)}{\cmd}: M' \backslash x; I' \backslash x
  }
  %

  \mprset{flushleft}
  \inferrule[Let-fresh]{
    \ft{ins} = I(e) \\
    \ft{callChain}(\fsums,\ft{ins})\subseteq \pdecls(\m{fresh}, f, \ell).\m{ins}
    \\\\
    \fdecls;\pdecls,\fsums; c; f; M\cup(x\mapsto M(e)); I\cup(x \hookleftarrow \ft{ins}) 
    \Vdash \cmd: M'; I'
  }{ \fdecls;\pdecls,\fsums; c; f; M; I
    \Vdash  \ell:\elet{\m{fresh}\;x}{e}{\cmd}: M' \backslash x; I' \backslash x
  }
\end{mathpar}

The rule \rulename{Call-nr} 
shows an example of checking function summaries. When calling $g$ with
an argument $v$ (not a reference), if $v$ depends on inputs,
there must be a caller summary for $g$ that records that $f$
propagates taint to $g$. Furthermore, if $g$ returns tainted
information, either locally-generated or due to $f$, those outputs
must be propagated to $x$ when checking the sub-command. We update the
provenance information in the outputs to reflect the fact that the
taint from $f$'s perspective comes from $g$. Further, the second
premise $\ft{checkUse}(\pdecls, v)$ checks that if $v$ is a use of
fresh policy, it has to be in the policy declaration.

The rule \rulename{Let-fresh} checks the fresh annotation. Any input
provenance that the expression of an annotated variable depends on
must be in policy associated with that annotation.
We use $\ft{callChain}(\fsums,\ft{ins})$ to reconstruct the call chain. 
In Figure~\ref{fig:impl-examples} the policy for the freshness example in
(a) must contain the input operation $\mt{sense}$ and its call chain
through the return into $x$ indicated in purple. The rule to check the
consistent annotation (omitted) is similar.
For our example, the two inputs are
(app, 1):: (confirm, 2) ::(pres, 1)::(sense(), 0)
and  (app, 1)::(confirm, 3) ::(pres, 1)::(sense(), 0), showing two different calls to pres.

To check the entire program, we write $\fdecls;\pdecls,\fsums \vdash
\fsums :\m{ok}$ to mean that all the functions are checked under all
specified calling contexts in the summary $\fsums$.

Finally, propagating input
dependence information is simple in this modeling language as there are
no mutable aliases allowed. The may-alias set for a location is always
a singleton set. We can easily find out whether we are writing to a
reference that is passed from the caller, which is difficult for
C and thus the reason why we use Rust.

\paragraphb{Atomic Region Checking}
This check is to make sure that \emph{all} the instructions and their call chains
mentioned in the policy declaration \emph{only}  appear in the correct
atomic region.  We write $\fdecls;\pdecls,\pmap; f; \prov; {\pol}s;
\aid \Vdash \cmd : {\pol}s'$ to mean that command $\cmd$ in function
$f$ is currently called from the call chain $\prov$, within atomic
region $\aid$. {\pol}s are the polices that $\aid$ enforces. After
$\cmd$ is checked, instructions in {\pol}s' still need to appear in
this atomic region. When the $c$ is not in an atomic region, 
{\pol}s and \aid are empty and the end of the judgment is $:\m{ok}$. 
These rules follow each call chain. For each
instruction, the rule checks whether the call chain and
instruction is mentioned in \pdecls. If so, the current
atomic region ID must match that in the \pmap. Then, this
instruction is marked as reached.  At the end of an atomic
region, the rule checks that all instructions in {\pol}s
are reached. Key rules are shown in Appendix~\ref{app:atomic-checking}.
For a program consisting of function declarations \fdecls, 
we say it passes the atomic region check 
if  $\fdecls;\pdecls,\pmap; \m{main}; \cdot; \emptyset;
\cdot \Vdash \fdecls(\m{main}) : \m{ok}$.

\subsection{Correctness}
\label{sec:reg-correctness}

We prove the following correctness theorem. 

\begin{thm}
  Given a program $p$ consisting of functions in $\fdecls$, 
  $\fdecls;\pdecls,\fsums \Vdash \fsums :\m{ok}$ and
  $\fdecls;\pdecls,\pmap; \m{main}; \emptyset;\cdot \Vdash
  \fdecls(\m{main}) : \m{ok}$, then $p$ satisfies all the policies.
\end{thm}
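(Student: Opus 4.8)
The plan is to connect the static sanity checks to the dynamic timeliness definitions through the operational meaning of atomic regions, in two parts: first, show that passing both checks forces every policy-relevant operation into a single atomic region in every intermittent run; second, show that an atomic region's completing execution witnesses a continuous execution with matching operation sequence and logical timestamps. Since the definitions in Section~\ref{sec:defs} quantify universally over intermittent traces and existentially over continuous traces, I would fix an arbitrary intermittent trace of $p$ (a sequence of $\Stepsto{}$ transitions from the initial configuration with command $\fdecls(\m{main})$) and, for each policy in $\pdecls$, construct the required continuous witness.

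The first part is a soundness argument for the analysis. I would prove, by induction on the derivation of the summary-checking judgment $\fdecls;\pdecls,\fsums; (g,\ell); f; M; I \Vdash \cmd : M'; I'$, a simulation invariant relating the static input-dependence map $I$ (and may-alias map $M$) to the taint-augmented runtime state: whenever a variable becomes tainted by an input operation at runtime, its provenance, reconstructed via $\ft{callChain}(\fsums,\cdot)$, appears in $I$, and dually $I$ introduces no spurious dependencies. The first premise $\fdecls;\pdecls,\fsums \Vdash \fsums:\m{ok}$ then guarantees that each policy in $\pdecls$ records exactly the input operations (with provenance) the annotated variable depends on and, for freshness, all of its uses. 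Interleaving this with an induction on the atomic-region-checking judgment $\fdecls;\pdecls,\pmap; f; \prov; {\pol}s; \aid \Vdash \cmd : {\pol}s'$, whose ``reached'' bookkeeping and end-of-region discharge are validated by the second premise, yields the key containment lemma: in the fixed intermittent trace, for each policy the corresponding inputs, definition, and uses all execute inside one atomic region $\aid$.

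The second part exploits the atomic-region semantics of Section~\ref{sec:jit-atomics}: on a power failure inside a region the saved context $\context$ is restored and the region re-executes from its start, so the only execution of the region whose effects persist completes without an interleaved power failure. I would isolate, within the intermittent trace, the completing run of the region $\aid$ containing a given policy's operations. Because no power failure occurs during this run, the logical timestamp $\timestamp$ advances exactly as under the continuous semantics $\SeqStepsto{}$, in lockstep with the number of instructions executed. I would then argue that the continuous semantics can replay precisely these steps from the configuration at the region's entry, since the two semantics agree step-for-step in the absence of failures, producing a continuous subtrace with the same operation sequence and identical timestamp differences. For freshness this gives a continuous witness matching the span from the first depended-upon input through the final use; for temporal consistency it gives one matching the time difference between the first and last input. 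This discharges both definitions.

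The main obstacle is the first part: establishing that the static taint summaries are sound \emph{and} complete with respect to the runtime input dependencies across context-sensitive, interprocedural calls. The delicate points are correctly threading provenance through the local-summary versus caller-summary distinction ($\m{local}$ versus $\m{call}$ in $\ft{fsum}$), ensuring $\ft{callChain}$ reconstructs the full chain the policy compares against, and proving that no input dependence is ever dropped, since a missed dependence could let a policy-relevant operation escape the atomic region and silently break the property. Rust's aliasing discipline, which forces every may-alias set to be a singleton, is what makes this argument tractable: it lets the invariant track input-dependence propagation through references exactly, avoiding the over- or under-approximation that a general pointer analysis would introduce.
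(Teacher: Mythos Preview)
Your two-part decomposition—(i) a containment lemma placing every policy-relevant operation inside a single atomic region, derived from soundness of the two checks, and (ii) an argument that a completed atomic region has the timing of some continuous execution—is exactly how the paper proceeds. The paper factors (i) through a one-step preservation lemma for the summary/policy check (the static $I$ over-approximates runtime taint at every reachable configuration), feeding a ``policy conservative'' lemma (every trace action relevant to a policy is already recorded in $\pdecls$), together with a separate ``symbolic check matches trace'' lemma for the atomic-region judgment; your simulation invariant and interleaved inductions amount to the same structure.

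One correction worth making: you ask for the invariant to be two-sided—runtime taint appears in $I$ \emph{and} ``$I$ introduces no spurious dependencies''—and you say the first premise makes $\pdecls$ record ``exactly'' the inputs. Only the over-approximation direction is needed, and only it holds in general: the checking rules join information across both branches of an $\m{if}$ and across calling contexts, so $I$ and hence $\pdecls$ may contain inputs that a particular trace never exercises. That is harmless—extra entries in $\pdecls$ just enlarge the atomic region—whereas a dropped dependency is fatal, which is why you correctly single out ``no input dependence is ever dropped'' as the crux. Drop the precision claim; the paper's configuration well-formedness is stated explicitly as $\m{overapprox}(\fsums, I, N^t, T)$, not as an equality.

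A smaller point: the formal definitions the theorem actually targets (Appendix~\ref{app:correctness-defs}) are phrased over \emph{continuous} traces $\MSeqStepsto{}$ and require only that the relevant segment be nested between matching $\m{begin\_atom}(\aid)$ and $\m{end\_atom}(\aid)$ markers. Your part~(ii), which fixes an intermittent trace and manufactures a continuous witness for the completing run of the region, is faithful to the informal picture in Section~\ref{sec:defs}, but for the formal statement you can work entirely with continuous traces and skip the reasoning about reboots and context restoration; once the containment lemma is in hand you are essentially done.
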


The proof relates the static checking rules to the execution traces, showing that 
if a program $p$ passes the checks
then all input operations that an annotated fresh variable depends on,
as well as any uses of the variable, will be in the same atomic
region. Any input operations that any item in a consistent set depends
on will be in the same region.  As the committed execution of a
region never experiences a power-failure, the committed execution
always has the same timing-behaviour as a continuous execution for any
operations in the region. Thus, w.r.t. to freshness and temporal
consistency, any intermittent execution of $p$ will preserve input
freshness and temporal consistency.

To prove \sys correct, we only need to prove that \sys's algorithms
produce results that pass those checks. This setup allows us to
integrate seamlessly with prior work on proving memory consistency of
intermittent systems~\cite{formal-foundations}.

 \paragraphb{Correctness of Region Size}  There are many possible region placements 
 that could pass the policy check---trivially, $\atomic{\aid, \omega}{\fdecls(\m{main})}$. Another aspect 
 to correct intermittent execution, however, is that any atomic region must be able 
 to complete with the energy that can be stored in the buffer. 
 Thus, Ocelot must infer the smallest regions that satisfy the checks to increase the likelihood that a program 
 is also correct with respect to energy consumption. If the 
 smallest possible region that guarantees correctness w.r.t. to timing policies is too large to 
 complete, such a program fundamentally cannot run correctly.

\section{\system Implementation}
\label{sec:implementation}
\sys's implementation in LLVM uses the region inference algorithm to transform an annotated program 
\fdecls into a program \fdecls' that 
passes the checks of Section~\ref{sec:check}.
The \sys implementation analyzes 
LLVM intermediate representation code generated from an annotated Rust program, determines the policy 
for each annotation, and infers and inserts atomic regions satisfying the policies.
\sys then links with the JIT checkpointing and undo-logging atomic
region runtimes.

\subsection{Mapping Annotations to Policies}
\label{sec:alg-ops}
The implementation of the policy building component closely matches the checking rules in~\ref{sec:check}, except 
that instead of checking that an operation is in the policy declaration, as in rule \rulename{Let-fresh}, 
the algorithm starts with empty policy declarations and adds the operations to the policies at those points. 
The algorithm first finds all annotation instructions, 
which are implemented as calls to the empty functions $\mt{Fresh}\alg{(var)}$ and
$\mt{Consistent}\alg{(var, id)}$. 
The algorithm builds a taint map associating variable
definitions with inputs and the provenance of the input.
Appendix~\ref{app:taint-tracking} shows the
map-building algorithm, which uses a taint tracking analysis that
is inter-procedural, context-sensitive, and leverages Rust's ownership model to
simplify pointer aliasing. We also assume no mutable globals, which are unsafe in Rust. Using 
the input-dependence map, the algorithm adds provenance information 
to the policies as described in Section~\ref{sec:check}. 
After computing the policies, the pass
erases the annotations and starts region
inference.

\subsection{Inferring Atomic Regions}
Algorithm~\ref{alg:infer} performs region inference. Given the function summary and policy declarations 
generated at lines 2 and 3, it aims to generate regions that pass the policy check. 
The algorithm calculates a point that dominates all operations in the policy to begin
the region and a point that post-dominates operations to end the
region. The main challenge is that the policy operations may not be in the same function scope.
The algorithm first finds a candidate function where all operations are either in the function or 
in a descendant of the function. 
It then associates each policy operation with the point 
in the candidate function that reaches the operation. 

To find the candidate, the algorithm maps each policy operation to its basic block 
(Line 5) and calls \rulename{findCandidate} with the block map and the root of the program. 
The function is recursive and tracks which basic blocks in the map execute in 
successor functions from the root.
If all blocks in the map are executed in the current
 root or its successors and no candidate is set, then the 
root returns itself as candidate. Consider example (b) in 
Figure~\ref{fig:impl-examples}. \rulename{findCandidate} starts from $\mt{app}$ and 
calls itself on $\mt{confirm}$. The invocation on 
$\mt{confirm}$ marks that it contains some blocks and calls itself on 
the calls to $\mt{pres}.$ These return the blocks that they called, but no candidate, 
as neither call to $\mt{pres}$ contains all blocks. Combining the results 
of its successors, $\mt{confirm}$ does contain all blocks. The invocation marks $\mt{confirm}$
as the candidate function, returning this to the invocation on $\mt{app}$. While $\mt{app}$ is also 
a root of all the blocks, the candidate is already set, so the invocation returns $\mt{confirm}$. 
Placing the region in $\mt{confirm}$ results in a smaller
 region than placing it in $\mt{app}$. 

 \begin{algorithmic}[1]
  \Function{inferAtomic}{\alg{Cmd}}
  \State \alg{map \gets buildSummary(Cmd)}
  \State \alg{pol \gets buildPolicies(Cmd, map)}
   
  \ForAll{\alg{set \in pol}}
  \State \alg{\forall item \in set, blocks[item] \gets item.basicBlock}
  

  \State \alg{goalFunc \gets findCandidate(blocks, Cmd.root)}
  \ForAll{\alg{item \in set}}
  \While{\alg{blocks[item].func \neq goalFunc}} 
    \State \alg{calls \gets blocks[item].func.callers()}
     \ForAll{\alg{ call \in calls}} 
    \If{\alg{call \in set}} 
    \State \alg{blocks[item] \gets call.basicBlock}
    \EndIf
    \EndFor
   
  \EndWhile
  \EndFor
  \State \alg{startDom \gets closestCommonDom(blocks)}
  \State \alg{endDom \gets closestCommonPostDom(blocks)}

  \State \alg{(S, E) \gets truncate(startDom, endDom, set)}
  \State \alg{Cmd.insertRegAt(S, E)}
  \EndFor
  \EndFunction
  \captionof{algorithm}{Atomic Region Inference}
  \label{alg:infer}
  \end{algorithmic}

To find the points in the goal
function that reach a policy operation, the algorithm traverses the call graph aided with the
basic block map (lines 8-15).
Until the function of each basic block in the map is the goal function,
the algorithm gets
the callers of the function and checks if the callsite is in the
policy, as the policy includes the provenance. 
If it is, traversing this path will get the
basic block closer to the goal function. The algorithm sets the map value
to the basic block of the callsite. For the 
freshness example in Figure~\ref{fig:impl-examples}, the basic block of the assignments to $t,t'$ 
is in the function $\mt{tmp}$. $\mt{tmp}$ is called by $\mt{app}$
at the callsite $x := \mt{tmp}$. This operation is in the policy, 
so the map values for $t, t'$ are set to the basic block of the 
callsite. Now all blocks in the map are in $\mt{app}$.

Once all blocks associated to the policy operations are in the same function, the
algorithm can use LLVM's built-in \rulename{closestCommonDominator}
and \rulename{closestCommonPostDominator} passes, returning
candidate \alg{startDom} and \alg{endDom} basic blocks (lines
17-18). Multiple returns in the source function do not cause the
post-dominance analysis to break, as the compiled code has a
return landing-pad that post-dominates all paths through the function.
Taking these blocks, the algorithm calls \rulename{Truncate}, 
which finds the latest point in the starting block
that dominates everything in the set and the earliest point in
the ending block that post-dominates everything in the
set. Inserting region start and end instructions at these points
creates an atomic region containing all the operations in the policy.

\subsection{Runtime Implementation}
\label{sec:runtime-impl}
To implement atomic regions with undo logging, we used WAR and EMW
analysis code publicly available from prior work~\cite{alpaca, formal-foundations}, porting both to
work for Rust code. The existing
implementation has a \alg{currentContext} variable that tracks region metadata.
We add to it a \alg{mode} field that is either \alg{jit}
or \alg{atomic}. The value is \alg{atomic} in an atomic region,
and is \alg{jit} otherwise.  An atomic region's checkpoint also saves volatile
execution context (registers, stack) along with performing undo-logging.  
The routines to save and restore volatile execution
context are the same for both JIT checkpoints and atomic regions, and 
 are similar to Hibernus~\cite{hibernus}. The checkpoint
routine copies registers and stack to a dedicated non-volatile memory region. 
Restoration copies values from non-volatile memory back into the context.
  
We target the Capybara energy-harvesting hardware platform~\cite{capybara}, which has a built-in comparator to 
monitor energy, the only hardware needed for JIT checkpointing. 
The firmware triggers an interrupt on low energy. 
We raised the voltage level on which the interrupt triggers and modified the ISR 
to handle JIT mode and atomic mode. 
In JIT mode, the ISR checkpoints volatile state and shuts down. 
In atomic mode, the ISR only shuts down. 
Similarly to Samoyed~\cite{samoyed}, we assume that the extra energy gained from raising the trigger point 
will always be enough to 
complete the checkpoint. As pointed out in prior work~\cite{tics,chinchilla,samoyed}, this assumption 
may not be true for programs with large and unpredictable stack sizes. 
None of our benchmarks have this behaviour and our 
implementation is sufficient to demonstrate \sys's correctness improvements with low overhead. 

\section{Evaluation}
\label{sec:evaluation}

We evaluate the performance and correctness of programs generated by \sys and the 
programmer effort of using \sys. 
We measure runtime overhead of a set 
of benchmarks compiled with \system, with just JIT checkpoints, and with just Atomic 
regions (similar to the execution model of DINO~\cite{dino}).
We measure the runtimes 
on continuous power, showing the inherent performance overheads of \system and Atomics even 
when energy is plentiful, and on intermittent power. While JIT is fastest, it is incorrect.  
\system has a mean 7\% runtime increase and is correct by construction.
To show correctness empirically, we run the \system programs with simulated power failure points 
chosen to be sufficient
to uncover any timing violations 
and on real intermittent power.
Finally, we compare the code changes needed to write correct 
programs with Ocelot, TICS~\cite{tics}, and Samoyed~\cite{samoyed}. We further discuss the 
difference between annotating code and manually adding atomic regions in Section~\ref{sec:discussion}.

\subsection{Benchmarks}
We use the following 6 benchmarks that are representative
 of sensor applications in the intermittent computing domain. 
 \alg{ Activity}, an activity recognition app, \alg{ Greenhouse}, a greenhouse monitor app, 
 \alg{ CEM}, a compression logger, \alg{ Photo}, an app that takes the average of 5
 photo-resistor readings, \alg{ Send Photo}, an app that samples a photo resistor and sends 
 a radio packet if the value is too high, and \alg{ Tire}, a tire pressure monitor.
 All benchmarks except for  \alg{ Tire} were originally written in C, ported to Rust by us.
 \alg{ Activity} and \alg{ Greenhouse} were obtained from the TICS artifact~\cite{tics}, 
 \alg{ Photo} and \alg{ Send Photo} were microbenchmarks used in Samoyed~\cite{samoyed} and were obtained from 
 the authors, and \alg{ CEM} is originally from DINO~\cite{dino}. 
 \alg{ Tire} we wrote ourselves.
 We characterize the benchmarks in Table~\ref{tab:bench}. The table shows 
 the provenance of each benchmark, the lines of code, the sensors used 
 or simulated (denoted with an asterisk), and the constraints 
 used. Comma-separated values mean that the constraints apply to separate pieces of data. 
 "FreshCon" means that both constraints were used on the \emph{same} piece 
 of data. Both unaltered and annotated benchmarks are located at 
 \url{https://github.com/CMUAbstract/ocelot}.

 \begin{table}[tbp]
    \centering
    
    \scriptsize
    \setlength\tabcolsep{1.2pt}
    \begin{tabular}{c|l|rrr}
    \hline  
    {\bf Origin}&{\bf App} & {\bf LoC} & {\bf Sensors} & {\bf Constraints}\\ 
    \hline
    \multirow{2}{*}{{\bf TICS}} 
    &  {\bf Activity} & 470 & Accel* & Con, Fresh \\
    &{\bf Greenhouse} & 170& Hum, Temp & Con \\
    
    \hline
    
    \multirow{2}{*}{{\bf Samoyed}} 
     &{\bf Photo}& 68& Photo& Con\\
     &{\bf Send Photo}& 92& Photo& Fresh\\
    \hline

    {\bf DINO}&{\bf CEM}& 292& Temp*& Fresh\\
    \hline
    {\bf \system}&{\bf Tire}& 338& Pres*, Temp*, Accel*& Fresh, Con, FreshCon
    \end{tabular}
    \caption{Benchmark Characteristics. 
    The origins:~\cite{tics, samoyed, dino}}
    \label{tab:bench}
\end{table}

\subsection{Overheads}
The goal of the performance evaluation is to make a generalizable comparison of \sys, which 
uses a JIT + Atomics execution model, to systems that use just Atomics~\cite{dino, coati, ratchet, chinchilla, mayfly,alpaca} or just 
JIT~\cite{hibernus,hibernusplusplus,idetic,quickrecall}.
We ran the benchmarks on the Capybara hardware platform~\cite{capybara}, harvesting 
energy from a PowerCast~\cite{powercast} 
antenna placed 10 inches away. 

\begin{figure}[htb]
    \centering
    \includegraphics[width=1.0\columnwidth]{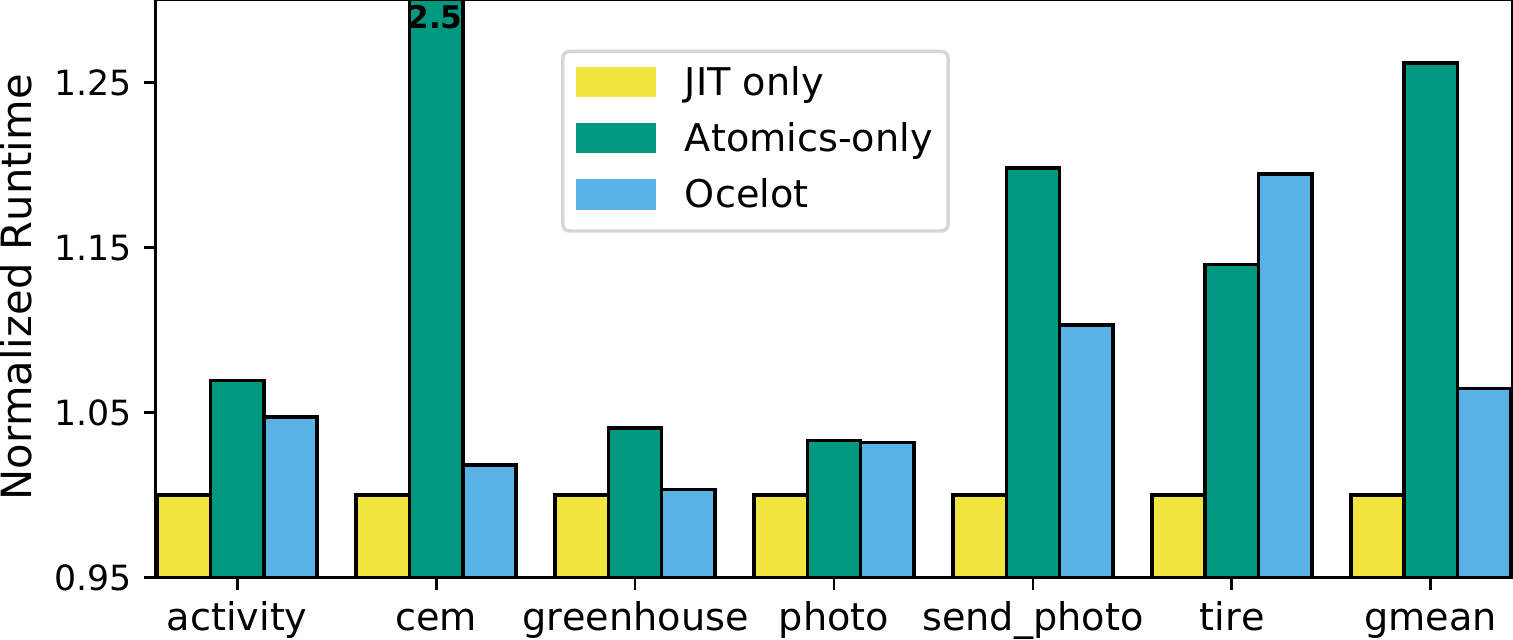}
    \caption{Continuous runtimes of JIT, Atomics, and \system}
    \label{fig:cont-runtimes}
\end{figure}
Figure~\ref{fig:cont-runtimes} shows the runtimes on continuous power 
 of the benchmarks compiled 
with JIT checkpoints only (yellow columns), with Ocelot-inferred Atomic regions 
(blue columns), and 
with Atomic regions only (teal columns). To enable 
correct output, calls to the UART were guarded by a small atomic region, generating 
a constant overhead for all configurations. 
Runtimes are normalized to the JIT execution, which has the least overhead at 
the cost of correctness, both of timing constraints and of basic peripheral operation~\cite{samoyed, sytare, restop, karma}.
 The y-axis shows 
the runtime increase, and the x-axis shows the benchmarks. The Atomics-only programs are entirely 
divided into atomic regions. We manually placed regions where Ocelot-inferred regions would go, to ensure that the 
correctness properties will still be satisfied, which is otherwise not guaranteed. 
If statically-placed checkpoints or tasks were used on the program in a prior work 
(\alg{Greenhouse}, \alg{Activity}, and \alg{CEM}), we tried to place atomic 
regions as similarly as possible. \alg{CEM} required a few code changes to run on 
the device, as the original program had a region with a WAR dependence on a large structure. 
Backing the entire structure to the undo log caused the program to be too large to flash 
to the device. We changed the code to remove any WAR dependences on that structure. 
Generally, atomic regions, whether manually placed or inferred add a reasonable 
amount of run 
time overhead. The geometric mean runtime increase of Ocelot programs to JIT  
is around 7\%. Atomics-only experiences similar overheads, except for CEM which 
has a 2.5 runtime increase. CEM grabs a sensor value once and then performs 
lookup and insertion into a compressed log. The inferred atomic region is small and 
infrequently executed, resulting in an Ocelot runtime that is close to JIT. With Atomics-only, 
all lookup and insertion code is in regions even though re-execution is unnecessary for either timing 
or memory correctness, 
resulting in a large overhead. \alg{Tire}, in contrast, is slightly faster with Atomics-only 
than with Ocelot. The Atomics-only version nests a frequently executing 
inferred region within a larger, less frequently executing region. At runtime, only the outermost bounds 
are treated as an atomic region.

Next, we show the runtimes of the benchmarks on intermittent power in 
Figure~\ref{fig:int-runtimes}. All bars are normalized to the JIT execution time 
on continuous power. Again, yellow represents 
JIT, blue represents Ocelot, and teal represents Atomics-only. For each benchmark, 
the lower, colored bar represents the time spent running the application, and the stacked 
grey bar represents the time spent off, charging. The lower sublot shows 
a closer view of the time spent running the application. Since JIT cannot execute peripheral 
operations correctly~\cite{samoyed, sytare,restop, karma}, 
we changed \alg{Greenhouse}, \alg{Photo}, and \alg{Send-Photo} 
to simulate sensors . Generally, the intermittent overheads have the 
same proportion as the continuous ones. A notable difference between the plots 
is that the runtime is dominated by charging time. The benchmarks were run on 
real hardware and harvested energy; the off, charging times are dictated by the 
physical environment.

\begin{figure}[htb]
    \centering
    \includegraphics[width=1.0\columnwidth]{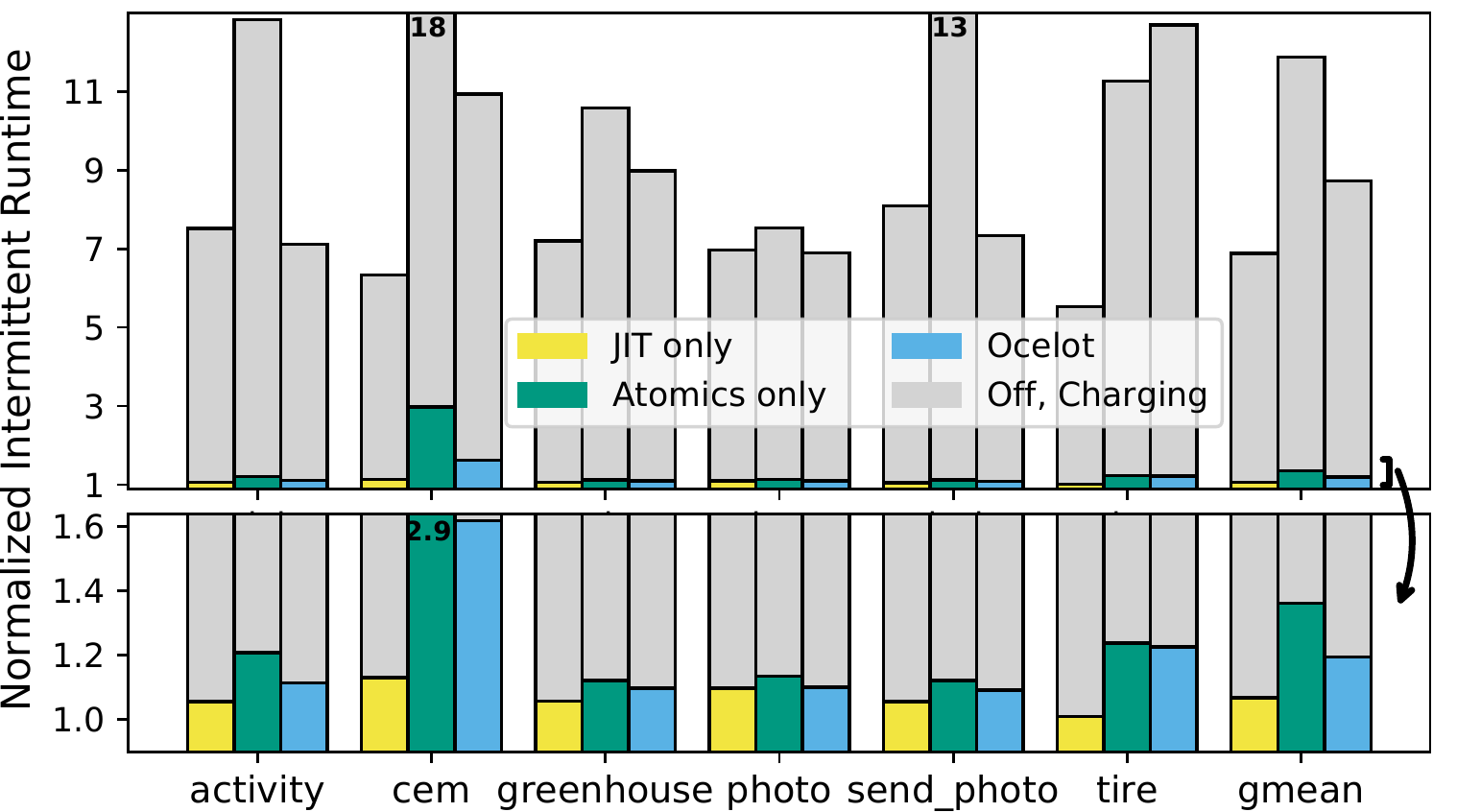}
    \caption{Intermittent runtimes of \system, JIT, and Atomics}
    \label{fig:int-runtimes}
\end{figure}

\subsection{Correctness}
We showed how to check Ocelot's correctness on a modeled language in Section~\ref{sec:provability}.
Here we empirically show the correctness of the implementation. 
Power can potentially fail at any 
instruction in an intermittent execution. To determine if a program will violate freshness and consistency 
policies, however, we must consider power failures only at a few 
key locations; 
there 
must not be a visible power-failure between the 
inputs and the dependencies of a fresh variable, and there must 
 not be a visible power failure between 
the inputs of a consistent set. Power failures outside these sub-traces
do not affect if the policy is 
upheld. We insert simulated 
power failures immediately before the use of a fresh variable and between input operations 
in a consistent set.
Power-failing at each instruction 
is unnecessary, as these failure points are sufficient to expose if 
the atomic region is placed correctly and will re-execute all necessary inputs.

To determine if an input is gathered before 
a power failure, we add bit vector in nonvolatile memory. Each 
sensor operation has a unique position in the bit vector.
On an input operation, the sensor's position in the bit vector is set to 1. 
On power failure, the bit vector is cleared. 
On the use of a fresh variable, the bits of any dependent sensors are checked. 
On an input operation in a consistent set, the bits of any preceding 
operations in the set are checked.
If the sensor has not been re-executed, the checked 
bit will be zero, generating an error. 
Table~\ref{tab:sim-correctness}~(a) 
shows the results of injecting these simulated power failures. \system programs 
did not experience any violations, whereas JIT programs always did.
\begin{table}
    \centering
    
    \scriptsize
    \setlength\tabcolsep{1.2pt}
\subfloat[{\bf Violating \% with pathological power failure points}]{

    \begin{tabular}{l|rrrrrr}
    \hline  
     &\multicolumn{6}{c}{{\bf Percentage Violating}}\\
     {\bf Exec. Model}&{\bf Activity} & {\bf CEM} & {\bf Greenhouse} & {\bf Photo} & {\bf Send Photo} & {\bf Tire}\\ 
    \hline
    {\bf \system}& 0\%&  0\%&  0\%&  0\%&  0\%&  0\%\\
    {\bf JIT}& 100\%& 100\%& 100\%& 100\%& 100\%& 100\%\\
    \end{tabular}
}

\subfloat[{\bf  Violating \% while running intermittently}]{
    
    \begin{tabular}{l|rrrrrr}
        \hline
        {\bf Exec. Model}&{\bf Activity} & {\bf CEM} & {\bf Greenhouse} & {\bf Photo} & {\bf Send Photo} & {\bf Tire}\\ 
        \hline
    {\bf \system}&0\%& 0\%& 0\%& 0\% &0\%& 0\%\\
    {\bf JIT}&50\%& 0\%& 24\%& 77\% & 50\%& 3\%\\
    \end{tabular}
}
\caption{Correctness comparison of \system to JIT}
 \label{tab:sim-correctness}
\end{table}

\begin{table*}[!t]
    \centering 
    \scriptsize
    \begin{tabular}{c|l|l|l|l}
        \hline
        {\bf System} & {\bf Constructs} & {\bf Strategy} & {\bf LoC Changes} &
        {\bf Correctly Upholds Freshness and Consistency}\\

        \toprule
        {\bf Ocelot} & Time-constraint Types& Annotate inputs, & 
        1*(num inputs) &
        \textcolor{teal}{\bf Correct. Intermittent execution}\\
        & &   time-constrained data  & + 1*(data with constraint)  & 
        \textcolor{teal}{\bf must match the continuous specification}\\
        \toprule
        {\bf JIT} & None& Do nothing & 
        0 &
        \textcolor{red}{\bf Incorrect}\\
        \hline
        {\bf Atomics} & Atomic Regions& Annotate inputs, & 
        1*(num inputs)   &
        \textcolor{orange}{\bf Programmer-dependent}\\
        & &  manually place regions.  & + 2*(num atomic regions) & 
        \textcolor{orange}{\bf could place regions incorrectly} \\
        & &  Reason about control, data flow.   &  &  \\

        \hline
        {\bf TICS} & Timestamp alignment, 
        & Add real-time expiry date, & 
        3*(time-sensitive data)  &
        \textcolor{teal}{\bf Real-time timeliness}\\
        &Expiration Catch, &  timestamp alignment operations,  
        & + $\Sigma^n_{i=0}(\mt{LoC~of~handler_i})$ & 
        \textcolor{orange}{\bf No clear mapping to temporal consistency} \\
        & Timely Branches &  and 
        expiration/branch points.  &  & \\
        & &  Write exception handlers.  &  & \\

        \hline
        {\bf Samoyed} & Atomic Functions& Reason about control, data flow.&
        $\Sigma^n_{i=0}(\mt{rewrite~cost~of~}f_i)$ + &

        \textcolor{orange}{\bf Programmer-dependent}\\
        & & Rewrite code to be function,   & $\Sigma^n_{i=0}(\mt{LoC~of~scalingRule_i})$ + & 
        \textcolor{orange}{\bf could put wrong code in atomic function} \\
        & & (opt) provide software fallback,  &  
        $\Sigma^n_{i=0}(\mt{LoC~of~fallback_i})$ &  \\
        & &  (opt) scaling rules.  &  & \\

    \end{tabular}
    \caption{Characterizing the Strategy of Using Ocelot}
    \label{tab:usability}
\end{table*}

The previous experiment shows 
that a policy violation cannot 
 occur on \system programs. To show that violations do 
 occur practically as well as theoretically on JIT programs, 
 we ran the programs with the added bit vector on intermittent power, 
 using the simulated sensor versions of \alg{ Greenhouse}, \alg{ Photo}, and \alg{SendPhoto}.
  We ran each benchmark 
 for a fixed time of 100 seconds and recorded the percentage of complete runs of the benchmark 
 that contained a policy violation. Each benchmark completed between 50 - 450 
 times, depending on the program runtime. The results are in Table~\ref{tab:sim-correctness}~(b).
 All benchmarks except 
 \alg{ CEM} experienced a policy violation within that window. \alg{ CEM} is 
 a compute heavy benchmark, and the freshness constraint only applies for 
 a few instructions. A policy violation is possible, but 
 experiencing a power failure at exactly the right 
 point is rare. Benchmarks like \alg{ Activity}, \alg{ Photo}, and 
 \alg{ Send Photo} have time constraints that cover much of 
 the program, so violations are frequent. 

 \subsection{Comparing Code Changes}
 \label{sec:prog-effort}
 \todo{Flip some of this to discussion section?}
We characterize the effort of using Ocelot. We compare the JIT baseline, 
Ocelot, and Atomics-only, plus the prior works TICS and Samoyed in Table 
~\ref{tab:usability}. The first column of the table 
shows the system. Column {\bf Constructs} shows 
the language constructs each system provides to the 
programmer to enable correct execution with inputs. Column 
{\bf Strategy} lists in brief the method to 
use the constructs. Column {\bf LoC Changes} estimates 
the lines of code needed to implement the strategy. The last column 
indicates if the methods succeed in providing fresh and temporally consistent 
intermittent executions.

\system 
requires only a small, bounded amount of code changes. The 
programmer must declare which functions generate input and apply 
\fresh and \consistent annotations to variables. Each annotation  
requires adding a single line of code, and the programmer \emph{never} has to 
write new program logic. 
The resultant program is correct by construction. 

JIT checkpointing provides nothing to the programmer, requiring no effort but 
offering no correctness. 
Atomics-only requires the programmer to 
reason about the dataflow and relationships of input operation to each other 
and place the regions. Since undo-logging backs up EMW sets~\cite{formal-foundations}, 
the programmer must also specify inputs. If the programmer reasons correctly, 
the resultant program will be correct.

TICS~\cite{tics} offers the programmer 
annotations that 
require reasoning about real time.
It provides expiry times, a timestamp alignment operator, an expiration check,
and a timely branch check. The latter checks also allow the 
programmer to specify an exception-like handler to execute if the check fails. Handlers impose an unknown 
burden on the programmer as they have to write new logic. If the original program 
has explicit real-time checks and exception handling, 
the process is straightforward and is a good match for TICS. 
Otherwise, the programmer must generate these from scratch.
TICS ensures that stale data is not processed, similar 
to freshness, though it does not guarantee the existence of a 
continuous execution with the same behaviour. 
If the programmer chooses an expiration time 
poorly, the program could behave in undesired ways. 
The TICS concept of timeliness does not cover temporal consistency.   

Samoyed~\cite{samoyed} focuses on safe peripheral operations and provides 
the programmer with atomic functions. Samoyed requires more rewriting work 
than simple atomic regions, as the code to be executed 
atomically must be a function. The programmer can also specify scaling 
rules and fallbacks, if the function takes too much energy to execute within a 
power cycle. 
If the 
programmer carefully reasons about the dependencies and relationships of 
input operations, they can use atomic functions to uphold freshness and consistency.

In Table~\ref{tab:concrete}, we model the concrete lines of code needed 
to enable correct execution on each of our benchmarks for \system, TICS, and Samoyed. For TICS, 
we estimate that each handler will take five lines of code. 
For consistent sets, we estimate that each variable incurs 2 LoC changes (expiry and timestamp alignment), 
but that there is only one expiration check and accompanying handler per set.   For Samoyed, 
we estimate that restructuring into atomic functions will take a fixed 3 LoC 
 (creating the atomic function signature, adding the callsite), plus an additional 
line for each parameter to the function. Scaling rules take 3 LoC, 
fallbacks take 5 LoC, and these are provided for any atomic function with a 
loop. For all benchmarks, \system requires the fewest annotations. Moreover, Ocelot 
does not require reasoning about real-time values, about information flow from 
inputs, or writing exception handling, instead enforcing correctness by construction. 

\begin{table}
    \centering
    
    \scriptsize
    \setlength\tabcolsep{1.2pt}
    \begin{tabular}{c|rrrrrr|l|l}
        & \multicolumn{6}{c}{{\bf LoC Changes}} & 
        
        {\bf Real-time } & {\bf Data-flow }\\ 
        {\bf Sys} &{\bf Act} & {\bf CEM} & {\bf G-house} & { \bf Photo}  
        & {\bf S-Photo} & {\bf Tire} & 
        {\bf Reasoning} & {\bf Reasoning}\\ 
        \hline
        {\bf \system} & 5 & 2 & 7 & 2 & 4 & 9 & \textcolor{teal}{\bf No} & \textcolor{teal}{\bf No}\\ 
        \hline
        {\bf TICS} & 20 & 8 & 12 & 8  
        & 8 & 32 & 
        \textcolor{red}{\bf Yes} & \textcolor{teal}{\bf No}\\ 

        {\bf Samoyed} & 18 & 4 & 6 & 12  
        & 4 & 24 &  
        \textcolor{teal}{\bf No} & \textcolor{red}{\bf Yes}\\ 
        
    \end{tabular}
    \caption{Effort of using \system vs. TICS and Samoyed}
    \label{tab:concrete}
\end{table}

\section{Discussion of Annotation Benefits}
\label{sec:discussion}
In this section we discuss the benefits of 
Ocelot annotations as compared to manually adding atomic regions.
Instead of using Ocelot annotations and allowing the system 
to infer atomic region placement, 
programmers can carefully place atomic region constructs 
to uphold timing constraints, but such an approach has several drawbacks. 

\paragraphb{Annotation Simplicity and Meaning} While adding \sys annotations and manually adding atomic regions both 
require the programmer to be aware of timing invariants in their program, 
programmers must use additional reasoning to 
correctly place atomic regions.  
Figure~\ref{fig:snippet1} shows a code snippet from the tire benchmark. 
The snippet describes the decision whether or not to send out a burst tire alarm. 
This decision should happen on a fresh sensor reading, and variables in the branch should 
be consistent with each other. Such a level of knowledge about program behaviour 
is sufficient to add \sys annotations -- {\tt currMotion} and {\tt avgDiff} should be marked 
$\m{Fresh}$ and $\m{Consistent}$ as in lines 1-2. 
\begin{figure}
    \scriptsize
\begin{verbatim}
1    FreshConsistent(avgDiff, 1);    
2    FreshConsistent(&currMotion,1);
3    if isMoving(&currMotion) && avgDiff > 0 {
4        sendData("urgent_burst_tire!\r\n\0");
5        *urgentWarningCount +=1;
6    }
\end{verbatim}
\caption{Tire code snippet}
\label{fig:snippet1}
\end{figure}

To manually place an atomic region, the programmer has to examine the 
data each of the variables depends on and make sure any 
inputs in that data flow are included in the 
atomic region. The programmer must know the invariants in either case, 
but adding an atomic region that includes every input the variables depends on and 
every use of the variables requires more work than annotating the variables at the declaration point only.
\todo{Make less abstract}
Even knowing the invariants, the programmer could make a mistake when manually adding a region, which 
would not be detected by the system as added atomic regions do not carry any specification information. 
The program has no explicitly declared guarantees of what properties are met. 
When using \sys annotations, however, the programmer is explicitly giving a specification of the 
timing properties that must be upheld, and the \sys-generated program will correctly uphold 
that specification. 
\limin{Can we explain the example fully, show what dependences e.g., y depends on ..., 
but not ... need to be identified? Now the description seems abstract}

\paragraphb{Region Size} As discussed in Section~\ref{sec:reg-correctness}, \sys's 
implementation aims to find the smallest region that satisfies the specified timing 
constraints. A programmer-added region may be uncessarily large. Consider the programming 
pattern in Figure~\ref{fig:snippet2}. 
\begin{figure}
    \scriptsize
\begin{verbatim}
1  fn main ()  {               fn confirm() { 
2    //should be consistent       let y = pres();
3    confirm();                   Consistent(y,1);
4  }                              let y' = pres();
5                                 Consistent(y,1);
6                                 ... //more processing
7                              }
\end{verbatim}
\caption{The intuitive atomic region around \rulename{confirm} could be too expensive}
\label{fig:snippet2}
\end{figure}
The function $\mt{main}$ calls function $\mt{confirm}$ which has a 
temporal consistency constraint on the assignments to $y, y'$. Programs 
with this pattern will likely do more processing on $y,y'$ in $\mt{confirm}$. If a programmer 
manually adding regions knows that $\mt{confirm}$ calls sensors that need to be consistent, they 
may simply wrap the entire function in an atomic region. While such a region placement does preserve the 
timing constraints, it uncessarily includes any processing in $\mt{confirm}$, while the 
\sys region would not. If sampling the sensors \emph{and} processing the data takes more 
energy than can fit in the buffer, 
the program with manually-added regions would fail to complete, while the \sys program would not. 
If an \sys program fails to complete, the specified timing constraints are fundamentally unsatisfiable 
with the energy capacity of the device.

\paragraphb{Using added regions and \sys together} Programmers may have programs that 
already have atomic regions placed, e.g., 
if they used Samoyed~\cite{samoyed} to write programs with safe peripheral 
operations, or otherwise want manual control over atomic region placement (that they 
are sure will run to completion). 
\milijana{vague...} \sys can be used with 
programs that already have atomic regions.  In this use case, \sys's analysis confirms that the region placement 
meets a program's annotated timing constraints. If the input to \sys is a 
program that already has atomic regions as well as annotations, 
\sys adds regions to enforce the annotations. 
While these added regions may overlap or duplicate existing ones, 
only the outermost bounds of nested regions execute (see 
Appendix~\ref{app:semantics}). The resultant program respects the atomicity of 
both programmer-specified and inferred regions without extra runtime overhead. 
Thus, \sys in conjunction with manually added regions can give the programmer control 
and correctness. Additionally, extending \sys with a true checker mode is straightforward.
After generating the policy sets, \sys could merely check that all instructions in 
each set are dominated by existing region boundaries, instead of inferring and placing the region boundaries.
\limin{the above is a description of what we do now. Can you imagine adding a flag to the compiler to say it's for checking only mode, and only calls parts of the algorithm implemented in \sys? This would be a more clean way to use \sys as a checker. }

\section{Related Work}

Areas related to \sys are intermittent systems with timeliness and reactivity,
 work on persistent memory correctness and crash-consistency, 
and data-centric concurrency. 

\noindent\textbf{Intermittent Systems with Inputs}
MayFly~\cite{mayfly} introduced the concept of timeliness, but its solution 
is complicated, requiring programmers to write programs as dataflow graphs with expirations 
on the edges. TICS~\cite{tics} 
is most similar to this work, providing timely intermittent 
computation through annotations on existing programs. In contrast to \sys, 
both these works require reasoning about real-time, do not examine temporal-consistency,  
and 
require additional hardware to keep time through power failures
~\cite{reliable-time, forgetfailure}. 
TICS also presents an architecture for 
constant-time checkpoints, which is complementary and can be used with \sys. 

Samoyed~\cite{samoyed}, RESTOP~\cite{restop}, Sytare~\cite{sytare} and Karma~\cite{karma} 
all address the problem of safe peripheral manipulation 
on intermittent systems, but do not consider application-level time-constraints. 
Samoyed provides atomically executing functions which can be used to ensure 
freshness and consistency, though at more effort than \sys. Samoyed 
also provides fallbacks if an atomic function is too large, which \sys does not. 
Karma additionally considers asynchronous inputs, i.e., from interrupts, which \sys does not.

Capybara~\cite{capybara} is a hardware platform with a reconfigurable energy buffer, 
allowing for larger atomic regions to be executed when needed. HomeRun~\cite{homerun} also explores 
hardware support for atomicity in I/O events. 
Accumulative Display Updating~\cite{adp} 
explores relaxing atomicity constraints for long-running peripheral operations, such as updating displays, which 
does not meet the correctness definitions of \sys. 
Coati~\cite{coati}  and InK~\cite{ink} focus on event-driven execution and
are task-based. Tasks can be used with programmer effort to ensure freshness and consistency. 

\noindent\textbf{Correctness Reasoning}
Prior works~\cite{avis, formal-foundations, formal-periph} model the correctness of 
intermittent systems. Intermittent computing correctness is also similar to 
correctness of persistent 
memory~\cite{raad1, raad2, raad3, mp, mp2, persistent-linearizability} and to file system crash consistency
~\cite{fs-cc-models,yggdrasil,reachability,fscq}. Our notion 
of correctness follows most closely from~\cite{formal-foundations,fs-cc-models, reachability}, 
which define intermittent (or crashy) executions as correct if they are 
a refinement of some continuous (or non-crashy) execution. However, all these works define correctness 
in terms of memory consistency, and this continuous execution may pause arbitrarily. 
In this work, we show that these pauses introduce behaviour in the intermittent execution 
that is undesirable, even though memory is consistent. Our definitions of fresh and temporal consistency 
impose constraints on where these pauses are allowed.

\noindent\textbf{Transactions and Data-Centric Concurrency}
Atomic regions are similar to transactions~\cite{stm, tm}, though 
transactions use atomicity for concurrency, not 
timely processing of inputs. 
We draw the concept of consistent sets from the line of 
\emph{data-centric} concurrency control research
~\cite{data-centric-static, data-centric-ceze, data-coloring, data-centric-types,vaziri-atomic-set-serializability}. The 
data-centric approach is that programmers should indicate data that need 
to be synchronized, rather than onerously reasoning about operations and 
trying to place synchronization constructs accordingly. 
Data-coloring~\cite{data-coloring} is a programming model to automatically 
infer transaction placement for data consistency,  but it does so dynamically, requiring 
hardware support.
~\cite{data-centric-static,
data-centric-types} use types and static analysis to automatically infer 
synchronization operation placement, such as locks, that guarantees correctness for specified 
atomic sets, but the meaning of correctness is different. 
An atomic set is correct if it is serializeable; intermittent programs may experience timing violations even when memory safe.

\section{Conclusion and Future Work}
We present the properties of freshness and temporal consistency for
intermittent executions, linking the correct timing behaviour of an
intermittent execution to that of a continuous execution. Using these
definitions, we observe that atomicity can be used to provide correct
timing behaviour as well as memory consistency. To help enforce timing
constraints, we develop \sys, which is lightweight, and unlike prior
work does not require external hardware or complex reasoning about
real-time expiration or dataflow. \sys uses simple annotations
indicating which data should be fresh or temporally consistent to
infer atomic regions placement that automatically enforces correct
behaviour at runtime. The development of \sys additionally leads to several avenues 
of future work.

\paragraphb{Integration with Rust formalisms} \sys is the first intermittent 
computing toolchain to target Rust programs. Rust is an attractive target for intermittent 
computing systems as Rust programs 
are memory safe, reducing the likelihood that memory bugs will make a device inoperable after deployment 
to a remote environement. To prove that intermittence 
does not subvert the safety guarantees of Rust, however, future work should integrate intermittent computing 
semantics into existing Rust formalisms~\cite{rustbelt,rustbelt2}. 

\paragraphb{User Studies on Programmer Effort} We discuss the stategies and model the lines of code needed 
to use \sys, TICS, and Samoyed in Section~\ref{sec:prog-effort}. 
Truly comparing programmer effort and usability, however, needs to be done via user study.
\sys raises the usability questions of real-time versus implicit annotations, as well as annotations 
versus manually added regions. Carrying out a comprehensive user study on 
such features would benefit future system designers.

\paragraphb{Reasoning about Forward Progress} Along with memory consistency and 
timing-constraints, another key issue of correctness for intermittent 
computing systems is ensuring that programs can execute to completion.
Analyses that identify the minimum atomic regions 
necessary for correct execution, such as \sys's, can serve as a foundation for developing 
tools and formalisms to reason about forward progress and the inherent energy constraints of 
a program.

\section*{Acknowledgements}
We thank the anonymous reviewers for their feedback and Martin Rinard for 
shepherding this work. 
We also thank members of 
the Abstract Research Lab for their insightful comments on initial drafts.
This work was generously funded in part through National Science Foundation Award 2007998,
National Science Foundation CAREER Award 1751029, and the CMU CyLab Security \& Privacy Institute.

\bibliography{bib}
\clearpage
\appendix
\section{Syntax of the Modeling Language}
\label{app:syntax}
\[
\begin{array}{llcl}
\textit{values} & v & \bnfdef & n  \bnfalt \m{true}
\bnfalt \m{false} \bnfalt \&x \bnfalt \&a[i]
\\
\textit{expressions} & e & \bnfdef & x  \bnfalt v \bnfalt (a[i])  \bnfalt e_1 \bop e_2 \bnfalt \uop e
\\

\textit{instructions} & \instr & \bnfdef & 
\m{skip} \bnfalt x := e \bnfalt a[i] :=e \bnfalt *x := e 
\\
\textit{commands} & \cmd & \bnfdef &
\instr \bnfalt \ifthen{e}{\cmd_1}{\cmd_2} \bnfalt \cmd_1;\cmd_2\\
&& \bnfalt &  \elet{x}{e}{\cmd}  \\
&& \bnfalt & \elet{x}{f(v)}{\cmd} \bnfalt \elet{x}{\m{IN()}}{\cmd}\\
&& \bnfalt & \elet{\m{fresh}~x}{e}{\cmd}  \\
&&\bnfalt&  \m{let}~\m{consistent(n)}~x = e~\ft{in}~\cmd \\ 
&& \bnfalt&     \atomic{\aid, \omega}{\cmd} \\
\\
\textit{function decls} & \fdecls & \bnfdef& \cdot\bnfalt \fdecls, f(\ft{arg})= c; \m{ret}\;e
\end{array}
\]

\section{Taint-augmented  Semantics}
To formally define fresh and temporal consistency, we augment the
operational semantic rules with taint tracking of inputs. Note that
this is solely for formal definitions and proofs. We summarize the new
syntactic constructs below.
\[
\begin{array}{llcl}
\textit{Stack} & S & \bnfdef & \m{main} \bnfalt f \rhd \ell:\elet{x}{\hole}{\cmd}\rhd S
\\ &&\bnfalt & \hole;\cmd \rhd S
\\
\textit{input OPs} & \ins & \bnfdef & \vec{\tau}
\\
\textit{memory} & \nvmem^t & \bnfdef & \emptyset \bnfalt \nvmem^t, x\mapsto (v, \ins)
\\&&\bnfalt &\nvmem^t, a\mapsto [(v_1, \ins_1), \cdots, (v_n,\ins_n)]
\\
\textit{observations} & o & \bnfdef & \cdots \bnfalt \m{fresh}(f, \ell, \ins)
\bnfalt \m{cnst}(f, \ell, n, \ins)
\\&&\bnfalt& \m{use}(f, \ell, \tau)

\end{array}
\]
We write $S$ to denote execution contexts. Since the program always
starts from the $\m{main}$ function, the bottom of the stack is
$\m{main}$. $f \rhd \ell:\elet{x}{\hole}{\cmd}$ indicates that the
current function being executed is $f$, and once it returns, the
result will be bound to $x$ and the execution continues at
$\cmd$. $\hole;\cmd$ is the context for evaluating sequences.  Each
memory location now stores both the value and the input operations
that the value depends on. We write $\ins$ to denote the list of time
stamps where the input operations occur on the trace.  The execution
of an instruction could generate observations. Three observations
relate to the timeliness properties. $\m{fresh}(f, \ell, \ins)$ means
that in function $f$, line $\ell$, a freshness policy is declared and
the anti-dependent inputs are in $\ins$. $\m{cnst}(f, \ell, n, \ins)$
is the corresponding observation for a consistent policy. $\m{use}(f,
\ell, \tau)$ is generated when in function $f$ line $\ell$, the
variable declared to be fresh at time $\tau$ is used.

Figure~\ref{fig:taint-semantics} shows selected semantic rules.
\begin{figure}[htbp]
\noindent \framebox{$\tau_1, N^t_1, S_1, c_1 \SeqStepsto{O} \tau_2, N^t_2, S_2, c_2 $}

\begin{mathpar}
  \mprset{flushleft}
  \inferrule*[right=Call]{    \ft{getTnt}(N^t, v) = \ins
    \\ \ft{getUse}(N^t, v) = o
  }{
    \tau, N^t, S, \ell: \elet{x}{f(v)}{\cmd}
    \\ \SeqStepsto{o}
    \tau+1,[\ft{arg}\mapsto (v,\ins)]\rhd N^t,
    \\\\\qquad f\rhd\ell: \elet{x}{\hole}{\cmd}\rhd S, \fdecls(f)
  }
  \and
  \inferrule*[right=Ret]{
    \ft{getTnt}(N^t, v) = \ins
    \\ \ft{getUse}(N^t, v) = o
  }{
     \tau, [\ft{arg}\mapsto \_]\rhd N^t,
     f\rhd\ell: \elet{x}{\hole}{\cmd}\rhd S, \m{ret}\, v
     \\ \SeqStepsto{o}
     \tau+1, [x \mapsto (v,\ins)]\rhd N^t, S, \cmd;\m{drop}
   }
   \and
   \inferrule*[right=Fresh]{
     \m{eval}(N^t, e) = (v, \ins)
     \\\ft{top}(S)= f
     \\\ft{getUse}(N^t, e) = O
  }{
    \tau, N^t, S, \ell: \elet{\m{fresh}~x}{e}{\cmd}
    \\ \SeqStepsto{O,\m{fresh}(\m{top}(S), \ell, \ins)}
    \tau+1,[x^{f, \ell, \tau} \mapsto (v,\ins)]\rhd N^t,S, \cmd;\m{drop}
   }
   \and
   \inferrule*[right=Consistent]{
     \m{eval}(N^t, e) = (v, \ins)
     \\\ft{top}(S)= f
     \\\ft{getUse}(N^t, e) = O
  }{
    \tau, N^t, S, \ell: \elet{\m{consistent(n)}~x}{e}{\cmd}
    \\ \SeqStepsto{O,\m{cnst}(\m{top}(S), \ell, n, \ins)}
    \tau+1,[x (v,\ins)]\rhd N^t,S, \cmd;\m{drop}
  }
  \and
  \inferrule*[right=In]{    
  }{
    \tau, N^t, S, \ell: \elet{x}{\m{IN}()}{\cmd}
    \\ \SeqStepsto{}
    \tau+1,[x \mapsto (\m{in}(\tau),\tau)]\rhd N^t,S, \cmd;\m{drop}
  }
  \and
\end{mathpar}
\caption{Augmented semantic rules}
\label{fig:taint-semantics}
\end{figure}

\section{Timeliness definitions}
\label{app:correctness-defs}

\begin{defn}[Freshness]
  We say that a program $p$ satisfies a freshness constraint declared in function $f$ at location $\ell$ if for all traces
  $T=  \tau, \emptyset, \m{main}, \cmd \MSeqStepsto{O}$,
  any segment $T'$ of $T$ s.t. $T'$
  \begin{itemize}
  \item includes an observation  $\m{fresh}(f, \ell, \ins)$ at $\tau$,
  \item begins with the earliest time stamp in $\ins$,
  \item includes all the observations $\m{use}(f, \ell, \tau)$
  \item ends with the last  $\m{use}(f, \ell, \tau)$ observation
  \end{itemize}
  it is the case that $T'$ is nested within a $\m{begin\_atom}(\aid)$
  and an end $\m{end\_atom}(\aid)$, with no other entering/exiting atomic region operations. 
\end{defn}

\begin{defn}[Consistency]
  We say that a program $p$ satisfies consistency constraints with ID(n) declared in function $f$ if for all traces
  $T=  \tau, \emptyset, \m{main}, \cmd \MSeqStepsto{O}$, exists $T'$ s.t. $T'$
  \begin{itemize}
  \item includes a call to function $f$ and return from $f$
  \item includes consistency observations:
    \\$\m{cnst}(f, \ell_1, n, \ins_1)$...$\m{cnst}(f, \ell_k, n, \ins_k)$
    \end{itemize}
  any segment $T''$ of $T$ s.t. $T''$
  \begin{itemize}
  \item begins with the earliest time stamp in $\bigcup_1^n\ins_i$
  \item ends with the last  time stemp in $\bigcup_1^n\ins_i$
  \end{itemize}
  it is the case that $T'$ is nested within a $\m{begin\_atom}(\aid)$
  and an end $\m{end\_atom}(\aid)$, with no other entering/exiting atomic region operations. 
\end{defn}

\section{Atomic region checking}
\label{app:atomic-checking}
Key rules for atomic region checking are shown in
Figure~\ref{fig:atimic-check}.  Note that when a function is called,
the function body is checked, and thus these set of rules traverse all
the execution paths. Since we don't have recursive functions, the
traversal is guaranteed to terminate.

\begin{figure}[tbhp]
\begin{mathpar}
  \inferrule*[right=Instr-N]{
    \m{lookup}(\pdecls,\pmap,f; \prov,\ell{:}\instr) = \m{none}
  }{   \fdecls;\pdecls,\pmap; f; \prov;  {\pol}s; \aid \Vdash \ell{:}\instr : {\pol}s
  }
  \and
  \inferrule*[right=Instr-S]{
    \m{lookup}(\pdecls,\pmap,f; \prov,\ell{:}\instr) = \aid
    \\  {\pol}s' = \m{remove}({\pol}s, f; \prov,\ell{:}\instr)
  }{   \fdecls;\pdecls,\pmap; f; \prov;  {\pol}s; \aid \Vdash \ell{:}\instr : {\pol}s'
  }
  \and
  \inferrule*[right=Call-N]{
  \m{lookup}(\pdecls,\pmap,f; \prov,\ell{:}\instr) = \m{none}
    \\ \fdecls;\pdecls,\pmap; g; (f, \ell)::\prov;  {\pol}s; \aid \Vdash \fdecls(g): {\pol}s'
    \\  \fdecls;\pdecls,\pmap; f; \prov;  {\pol}s'; \aid \Vdash \cmd: {\pol}s''
  }{    \fdecls;\pdecls,\pmap; f; \prov;  {\pol}s; \aid \Vdash
     \ell:\elet{x}{g(v)}{\cmd}: {\pol}s''
  }
  \and
  \inferrule*[right=Atomic]{
    \fdecls;\pdecls,\pmap; f;  \prov; \m{lookup}(\pdecls,\pmap,\aid); \aid  \Vdash \cmd : {\pol}s
    \\\m{isEmpty}({\pol}s)
}{   \fdecls;\pdecls,\pmap; f; \prov; \emptyset; \cdot \Vdash \atomic{\aid, \omega}{\cmd} : \m{ok}
  }
\end{mathpar}
\caption{Atomic region checking rules}
\label{fig:atimic-check}
\end{figure}

We prove the following lemma showing that the static checking
over-approximates the real trace (due to execution taking different
branches).
\begin{lem}[Symbolic check matches trace]~\label{lem:atomic-check}
  ~\\
  \begin{enumerate}
  \item If $\fdecls;\pdecls,\pmap; f; \prov; \emptyset; \cdot \Vdash
    \cmd: \m{ok}$, and $\m{callStack}(S)= f::\prov$ and $ T = (\tau,
    \nvmem^t, S, \cmd)\MSeqStepsto{O}(\tau_1, \nvmem^t_1, S_1,
    \m{skip})$, then for all policies in $\pdecls$ are satisfied on $T$.

    \item If $\fdecls;\pdecls,\pmap; f; \prov; \pol{s}; \aid \Vdash
      \cmd: \pol{s}'$, and $\m{callStack}(S)= f::\prov$ and $ T =
      (\tau, \nvmem^t, S, \cmd)\MSeqStepsto{O}(\tau_1, \nvmem^t_1,
      S_1, \m{skip})$, then all actions in $\pdecls$, only actions in
      $\pol{s}'\backslash\pol{s}$ are performed on $T$. Actions in
      $\pol{s}'\backslash\pol{s}$ not performed on $T$ cannot be
      reached on $T$ (need to explore a different path).
  \end{enumerate}
\end{lem}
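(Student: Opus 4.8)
The plan is to prove part~(2) first, as the strengthened, inductively-robust statement, and then to obtain part~(1) as the special case in which $\pol{s}=\emptyset$, $\aid=\cdot$, and the result is $\m{ok}$. Throughout I write $\Delta$ for the set of policy instructions that the derivation removes from $\pol{s}$ in the course of checking $\cmd$ (equivalently, the instructions marked reached by \rulename{Instr-S} and its analogues); part~(2) claims that $\Delta$ over-approximates the set of policy actions actually performed on $T$, and that any surplus is unreachable on $T$. The proof is by induction on the derivation of the checking judgment $\fdecls;\pdecls,\pmap; f; \prov; \pol{s}; \aid \Vdash \cmd : \pol{s}'$, decomposing the trace $T$ in parallel. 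Because the language has no recursive functions, rule \rulename{Call-N} inlines the check of the callee body $\fdecls(g)$, so the derivation is finite and its structure tracks the execution. The invariant threaded through the induction is the correspondence $\m{callStack}(S) = f{::}\prov$ between the static provenance and the dynamic call stack, which is exactly what lets $\m{lookup}(\pdecls,\pmap,f; \prov,\ell{:}\instr)$ decide whether a dynamic step at label $\ell$ is a policy action and, if so, in which region it should occur.

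The two base cases are the instruction rules. In \rulename{Instr-N}, $\m{lookup}$ returns $\m{none}$, so $\pol{s}'=\pol{s}$, $\Delta=\emptyset$, and the one-step trace executing $\instr$ performs no policy action, matching the claim. In \rulename{Instr-S}, $\m{lookup}$ returns the current region $\aid$, the derivation removes exactly the executed instruction, so $\Delta$ is that single instruction and the semantic step performs precisely that policy action within region $\aid$. The inductive cases compose these. For \rulename{Call-N}, I split $T$ at the \rulename{Ret} step into the sub-trace of the callee body and the sub-trace of the continuation $\cmd$; the callee sub-trace runs with call stack $g{::}(f,\ell){::}\prov$, matching the extended provenance used to check $\fdecls(g)$, so the induction hypothesis applies to each piece and the removed sets compose to give $\Delta$. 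The omitted rules for sequencing, conditionals, and a \rulename{Call-S}/use variant are handled the same way; the conditional is the one case where the check visits \emph{both} branches and thus removes instructions from both, while $T$ follows only one. This is the source of the qualifier in part~(2): here I argue that any element of $\Delta$ not performed on $T$ lies on the untaken branch and is therefore unreachable on $T$.

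Part~(1) then follows from the \rulename{Atomic} case together with the timeliness definitions. Entering $\atomic{\aid,\omega}{\cmd}$, the rule seeds the body check with $\m{lookup}(\pdecls,\pmap,\aid)$ and region $\aid$ and demands $\m{isEmpty}(\pol{s})$ of the remainder; by part~(2) this means every policy instruction assigned to $\aid$ is reached on $T$, and reached \emph{inside} the body of the region. In the continuous semantics an atomic region simply executes its body, bracketed by the observations $\m{begin\_atom}(\aid)$ and $\m{end\_atom}(\aid)$; because \sys flattens nested or overlapping regions, and the \rulename{Atomic} rule only fires from the non-nested state $\emptyset;\cdot$, no other region boundary occurs between them. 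Hence the defining segment $T'$ of freshness, from the earliest input in $\ins$ through the last $\m{use}(f,\ell,\tau)$ observation, and the segment $T''$ of consistency, spanning $\bigcup_i \ins_i$, both lie within this bracketed body, which is exactly the nesting condition the definitions of Freshness and Consistency (Appendix~\ref{app:correctness-defs}) require. Coverage of the observed inputs and uses by the recorded policy instructions is supplied by the companion summary check (the first hypothesis of the correctness theorem of Section~\ref{sec:reg-correctness}); Lemma~\ref{lem:atomic-check} contributes the bracketing, and the two together give that every policy in $\pdecls$ is satisfied on $T$.

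The main obstacle I expect is the conditional/over-approximation bookkeeping in part~(2): the check merges the removed sets of both branches, so $\Delta$ may strictly contain the performed actions, and the proof must certify that each surplus element is genuinely unreachable on the particular trace $T$ rather than merely skipped. Closely related, and equally fiddly, is maintaining the provenance-to-call-stack correspondence across the \rulename{Call}/\rulename{Ret} steps so that $\m{lookup}$ and the policy's recorded call chains line up; the no-recursion assumption is what keeps this matching well-founded and the induction terminating. By comparison, the continuous-semantics handling of the atomic brackets and the appeal to \sys's region flattening are routine.
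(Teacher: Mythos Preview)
Your proposal is correct and follows the same approach as the paper: the paper's proof is the single line ``By induction over the derivation,'' and your plan is precisely a fleshed-out version of that induction, handling the instruction base cases, threading the provenance--call-stack correspondence through \rulename{Call-N}, and using the conditional rule as the source of the over-approximation in part~(2). One small slip to tighten: in your \rulename{Atomic} paragraph you write that ``every policy instruction assigned to $\aid$ is reached on $T$,'' but by your own over-approximation argument some may lie on untaken branches; what you actually need (and what part~(2) gives) is that every policy action \emph{performed on} $T$ lies inside the region body, which suffices for the bracketing condition in the freshness and consistency definitions.
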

\begin{proof}(Sketch)
  By induction over the derivation. 
  \end{proof}

\section{Checking summary and policy declaration}
\label{sec:time-checking}

Figure~\ref{fig:tmap-checking} summarizes taint summary and policy checking rules. 

\begin{figure*}[t!]
\begin{mathpar}\mprset{flushleft}
  \inferrule*[right=Input]{
     \fdecls;\pdecls,\fsums;c; f; M; I\cup(x \hookleftarrow ((f, \ell),\m{local}(\ell)) 
     \Vdash \cmd: M'; I'
 }{ \fdecls;\pdecls,\fsums;c; f; M; I
    \Vdash  \ell:\elet{x}{\m{IN}()}{\cmd}: M' \backslash x; I' \backslash x
  }
  \and
  \inferrule*[right=Let]{
    \ft{checkUse}(\pdecls, e)
    \\
    \ft{inputs} = I(e) \\
    \fdecls;\pdecls,\fsums; c; f; M; I\cup(x \hookleftarrow \ft{inputs}) 
    \Vdash \cmd: M'; I'
  }{ \fdecls;\pdecls,\fsums; c; f; M; I
    \Vdash  \ell:\elet{x}{e}{\cmd}: M' \backslash x; I' \backslash x
  }
  \and
  \inferrule*[right=Call-nr]{
    \ft{checkUse}(\pdecls, v)\\
    v~\mbox{is not a reference}\\
    \ft{inputs} = I(v) \\
    \fsums(g) = s\\
    \ft{inputs} \subseteq s(\m{call}, f, \ell, \m{arg})\\
    \ft{outputs} = s(\m{local}, \m{ret}) \cup s(\m{call}, f, \ell, \m{ret})\\
    \ft{outputs'} = \ft{outputs}[\ft{fromTp}\mapsto \m{retBy}(g, \ell)] \\
    \fdecls;\pdecls,\fsums; c; f; M; I\cup(x \hookleftarrow \ft{outputs'})
    \Vdash \cmd: M'; I'
  }{ \fdecls;\pdecls,\fsums; c; f; M; I
    \Vdash  \ell:\elet{x}{g(v)}{\cmd}: M' \backslash x; I' \backslash x
  }
  \and
  \inferrule*[right=Call-r]{
    \ft{checkUse}(\pdecls, v)\\
    \ft{inputs} = I(y) \\
    \fsums(g) = s\\
    \ft{inputs} \subseteq s(\m{call}, f, \ell, \m{arg})\\
    \ft{outputs} = s(\m{local}, \m{ret}) 
    \cup s(\m{call}, f, \ell, \m{ret})\\
    \ft{outputs'} = \ft{outputs}[\ft{fromTp}\mapsto \m{retBy}(g, \ell)) \\
    \ft{pbr} =s(\m{local}, \m{\&arg})
    \cup s(\m{call}, f, \ell, \m{\&arg})\\
       \ft{pbr'} = \ft{pbr}[\ft{fromTp}\mapsto \m{pbr}(f, \ell)] \\
    \fdecls;\pdecls,\fsums; c; f; M;
    (I\cup(x \hookleftarrow \ft{outputs'}))[y \hookleftarrow \ft{pbr'}]
    \Vdash \cmd: M'; I'
  }{ \fdecls;\pdecls,\fsums;c; f; M; I
    \Vdash  \ell:\elet{x}{g(\&y)}{\cmd}: M' \backslash x; I' \backslash x
  }
  \and
  \inferrule*[right=Ret]{
     \ft{checkUse}(\pdecls, e)\\
    \ft{inputs} = I(e) \\
    \fsums(f) = s\\
    \ft{inputs} \subseteq s(\m{call}, c, \m{ret})\cup  s(\m{local}, \m{ret})\\
  }{ \fdecls;\pdecls,\fsums; c; f; M; I
    \Vdash  \ell:\m{ret}\; e : \m{done}
  }
  \and
  \inferrule*[right=Assign]{
    \ft{checkUse}(\pdecls, e)\\
    \ft{inputs} = I(e) \\
    I'= I[x \hookleftarrow \ft{inputs}]
  }{ \fdecls;\pdecls,\fsums; c; f; M; I \Vdash  \ell: x := e : M'; I'
      }
  \and
  \inferrule*[right=Assign-Ref]{
    M' = M[x \mapsto M(e)]\\
    \ft{checkUse}(\pdecls, e)\\
    \\\\
    \ft{inputs} = I(e) \\
    I'= I[x \hookleftarrow \ft{inputs}]
    \\\\        
    \mbox{if}~ M(x) = \ft{arg}: ~\fsums(f) = s\\
    \quad \ft{inputs} \subseteq s(\m{local}, \&\m{arg})
    \cup s(\m{call}, c, \m{\&arg} )
  }{ \fdecls;\pdecls,\fsums; c; f; M; I \Vdash  \ell: *x := e : M'; I'
  }
  \and
  \inferrule*[right=Let-fresh]{
    \ft{inputs} = I(e) \\
    \ft{callChain}(\fsums,\ft{inputs})\subseteq \pdecls(\m{fresh}, f, \ell).\m{inputs}
    \\\\
    \fdecls;\pdecls,\fsums; c; f; M\cup(x\mapsto M(e)); I\cup(x \hookleftarrow \ft{inputs}) 
    \Vdash \cmd: M'; I'
  }{ \fdecls;\pdecls,\fsums; c; f; M; I
    \Vdash  \ell:\elet{\m{fresh}\;x}{e}{\cmd}: M' \backslash x; I' \backslash x
  }

  \and
  \inferrule*[right=Let-consistent]{
    \ft{inputs} = I(e) \\
    \ft{callChain}(\fsums,\ft{inputs})\subseteq \pdecls(\m{consistent}, f, \ell).\m{inputs}
    \\\\
    \fdecls;\pdecls,\fsums; c; f; M\cup(x\mapsto M(e)); I\cup(x \hookleftarrow \ft{inputs}) 
    \Vdash \cmd: M'; I'
  }{ \fdecls;\pdecls,\fsums; c; f; M; I
    \Vdash  \ell:\elet{\m{consistent(n)}\;x}{e}{\cmd}: M' \backslash x; I' \backslash x
  }
  \and
  \inferrule*[right=Atomic]{
    \fdecls;\pdecls,\fsums; c; f; M; I \Vdash \cmd : M'; I'
  }{    \fdecls;\pdecls,\fsums; c; f; M; I  \Vdash \atomic{\aid, \omega}{\cmd} : M'; I'
  }
\end{mathpar}

\begin{mathpar}
  \inferrule*{
    \pdecls(\m{fresh},f) = F  \\
    \forall (f, \ell_i) \in \m{dom}(F), s.t. \fdecls(f, \ell_i).\m{var} \in \m{fv}(e),
    (f, \ell) \in F(f, \ell_i).\m{uses}
  }{
    \ft{checkUse}(\pdecls, e)
    }
  \end{mathpar}
\caption{Checking taint and use policies}
\label{fig:tmap-checking}
\end{figure*}

We define well-formedness of a configuration w.r.t. a trace $T$ as follows.
$\fdecls;\pdecls,\fsums; T \Vdash (N^t, S, \cmd) :\m{ok}$
iff exists $M$, $\cc$; $f$, $I$ s.t.
$\fdecls;\pdecls,\fsums; \cc; f; M; I \Vdash \cmd : M'; I'$, 
$\m{isTop}(S, f, \cc)$, $\m{overapprox}(M, N^t|\m{fv}(\cmd))$,
$\m{overapprox}(\fsums, I, N^t|\m{fv}(\cmd), T)$,
and $\fdecls;\pdecls,\fsums; M'; I' \Vdash S : \m{ok}$,

$\m{overapprox}(M, N^t)$ is true if the may alias in $M$ over
approximate the may alias in the memory.
$\m{overapprox}(\fsums, I,N^t, T)$
is true if the tainting from inputs in $I$ over approximates
the taint information in the memory. Here, $I$ includes only segments
of the taint provenance, on the other hand, the taint information
stored in $N^t$ includes timestamp of the taint operation. We further
define a function to recover the call trace from $I$ and $\fsums$. It
essentially traverses the $\ft{fromTp}$ and stops at a local input
operation. From the memory, we can extract the call chain from $S$ in
the configuration at that time stamp on $T$.

\begin{lem}[One step preservation]\label{lem:one-step}
  Given a
  trace $T$ s.t. $T\SeqStepsto{}(\tau, N^t, S, \cmd)$, and
  $\fdecls;\pdecls,\fsums; T \Vdash (N^t, S, \cmd) :\m{ok}$ and
  $(\tau, N^t, S, \cmd)\SeqStepsto{} (\tau_1, N^t_1, S_1, \cmd_1)$
  then
  $\fdecls;\pdecls,\fsums; T\SeqStepsto{}(\tau, N^t, S, \cmd) \Vdash N_1^t, S_1, \cmd_1: \m{ok}$. 
\end{lem}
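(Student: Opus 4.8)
The plan is to prove this as the inductive (preservation) step of the soundness argument, by case analysis on the operational step with inversion of the checking derivation that witnesses the hypothesis. First I would unfold the well-formedness assumption $\fdecls;\pdecls,\fsums; T \Vdash (N^t, S, \cmd) :\m{ok}$ to extract witnesses $M$, $\cc$, $f$, $I$ together with (i) a checking derivation $\fdecls;\pdecls,\fsums; \cc; f; M; I \Vdash \cmd : M'; I'$, (ii) $\m{isTop}(S, f, \cc)$, (iii) the over-approximation facts $\m{overapprox}(M, N^t|\m{fv}(\cmd))$ and $\m{overapprox}(\fsums, I, N^t|\m{fv}(\cmd), T)$, and (iv) the stack check $\fdecls;\pdecls,\fsums; M'; I' \Vdash S : \m{ok}$. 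The goal is then to construct corresponding witnesses $M_1, \cc_1, f_1, I_1$ for $(N^t_1, S_1, \cmd_1)$ relative to the extended trace, re-establishing (i)--(iv).

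Next I would do case analysis on the semantic rule, which is fixed by the head of $\cmd$. The intraprocedural cases (sequencing, \rulename{If}, \rulename{Assign}, \rulename{Assign-Ref}, \rulename{Let}, \rulename{Let-fresh}, \rulename{Let-consistent}, \rulename{In}, skip, and the transparent \rulename{Atomic} wrapper) are routine: the command reduces to a sub-command that is exactly the one checked in a premise of the matching rule of Figure~\ref{fig:tmap-checking}, so the checking derivation for $\cmd_1$ is a sub-derivation of (i) with the same $f$ and $\cc$, and with $M$, $I$ extended by precisely the binding the semantics installs into $N^t_1$. The work in these cases is to show over-approximation is preserved: each checking rule updates $I$ (and $M$) with exactly the input-dependence (and may-alias) information the corresponding semantic rule attaches to the value written to memory, and extending $T$ by the current step only appends the step's timestamp. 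For \rulename{In} the new timestamp $\tau$ is recorded abstractly as $\m{local}(\ell)$, matching the local provenance that the concrete taint $\tau$ has on the trace; for \rulename{Let-fresh}/\rulename{Let-consistent}, evaluating $e$ yields a concrete taint set that $I(e)$ over-approximates, and the $\ft{callChain}(\fsums,\ft{inputs})\subseteq\pdecls$ side-condition is carried over unchanged.

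The interesting cases are \rulename{Call} (and its reference variant) and \rulename{Ret}, where the stack changes and context-sensitivity is exercised. For a call to $g$, the new command is the body $\fdecls(g)$ and the stack pushes the continuation frame $f\rhd\ell{:}\elet{x}{\hole}{\cmd'}\rhd S$. Here I would obtain the checking derivation for $\fdecls(g)$ under the calling context $(f,\ell)::\cc$ from the standing global summary check $\fdecls;\pdecls,\fsums \Vdash \fsums :\m{ok}$ of Section~\ref{sec:check} (which asserts every body checks under every recorded calling context)---note that (i) only consults $\fsums(g)$ and does \emph{not} itself check the body, so this ambient assumption is genuinely required. I then set $f_1 = g$, $\cc_1 = (f,\ell)::\cc$, and discharge the new stack check by combining the \rulename{Call-nr}/\rulename{Call-r} premise that checks $\cmd'$ (which supplies the pushed frame) with the inherited stack check (iv) for $S$. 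Over-approximation for the freshly bound argument holds because $\ft{getTnt}(N^t,v)$ is the concrete taint of $v$, which $I(v)$ over-approximates, and the summary condition $\ft{inputs}\subseteq s(\m{call}, f, \ell, \m{arg})$ aligns the abstract argument slot with it. The \rulename{Ret} case is dual: popping the frame recovers the caller's continuation $\cmd'$, whose checking derivation is exactly the one stored in (iv), and the return value's concrete taint flows to $x$ over-approximated by the $\m{retBy}(g,\ell)$-relabelled outputs, matching the $\ft{outputs}[\ft{fromTp}\mapsto\m{retBy}(g,\ell)]$ rewrite performed on the call side.

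The hard part will be maintaining $\m{overapprox}(\fsums, I, N^t, T)$ across calls and returns. This invariant relates the partial provenance stored in $I$---chains of $\ft{fromTp}$ markers $\m{local}/\m{retBy}/\m{pbr}/\m{argBy}$---to the fully timestamped taint in $N^t$ by re-threading the chain through $\fsums$ and recovering the actual call trace from $S$ at the relevant timestamps of $T$. I must show that each provenance transformation a checking rule applies ($\m{retBy}$ on return, $\m{pbr}$ on a pass-by-reference write-back, $\m{argBy}$ on argument entry) corresponds exactly to how the concrete timestamps propagate in the semantics, so the reconstructed call chain still matches the live stack. The pass-by-reference write-back in \rulename{Call-r}/\rulename{Assign-Ref} is the most delicate sub-case, but it is tractable precisely because the no-mutable-alias assumption makes every may-alias set in $M$ a singleton, so the abstract state names a unique caller location to receive the back-propagated taint. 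I expect the remaining bookkeeping to be a straightforward, if tedious, induction following the shape of the rules.
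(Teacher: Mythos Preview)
Your proposal is correct and follows the same approach the paper takes: the paper's proof is the single line ``By examining all the semantic rules,'' and your case analysis on the operational step with inversion of the checking derivation is precisely that examination spelled out. Your identification of \rulename{Call}/\rulename{Ret} as the nontrivial cases, and of the provenance-reconstruction invariant $\m{overapprox}(\fsums, I, N^t, T)$ as the delicate part, is accurate and goes well beyond what the paper records; in particular your observation that the callee's body derivation must come from the ambient global check $\fdecls;\pdecls,\fsums \Vdash \fsums :\m{ok}$ (since \rulename{Call-nr}/\rulename{Call-r} only consult $\fsums(g)$) is a genuine detail the paper's sketch elides.
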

\begin{proof}(sketch)
    By examining all the semantic rules. 
\end{proof}

\begin{lem}[Trace preservation]
  \label{lem:preservation}
  If a program is checked to be $\m{ok}$, a trace starting from
  initial state and function $\m{main}$ all intermediate
  configurations are also $\m{ok}$.
\end{lem}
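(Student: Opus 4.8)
The plan is to prove the statement by induction on the number of small-step transitions in the trace, using the single-step result of Lemma~\ref{lem:one-step} as the inductive engine. I would first strengthen the statement so that the induction hypothesis records \emph{which} prefix serves as the trace index: for every configuration $(\nvmem^t, S, \cmd)$ reachable from the initial state along a prefix $T$, we have $\fdecls;\pdecls,\fsums; T \Vdash (\nvmem^t, S, \cmd) :\m{ok}$. This strengthening is essential because the well-formedness judgment is itself indexed by the trace that produced the configuration, and the conclusion of Lemma~\ref{lem:one-step} extends that index by exactly the step just taken; the induction must therefore thread the growing prefix through each application rather than reasoning about a fixed $T$.

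\textbf{Base case.} For the empty prefix I would show the initial configuration $(\tau_0, \emptyset, \m{main}, \fdecls(\m{main}))$ is $\m{ok}$. To discharge the existentials in the definition of well-formedness, take $M$ and $I$ empty, $\cc$ the empty caller context, and $f = \m{main}$. The required checking derivation $\fdecls;\pdecls,\fsums; \cc; \m{main}; \emptyset; \emptyset \Vdash \fdecls(\m{main}) : M'; I'$ is extracted from the global hypothesis $\fdecls;\pdecls,\fsums \Vdash \fsums :\m{ok}$, which checks every function under all its summarized calling contexts. The two over-approximation side conditions $\m{overapprox}(M, \nvmem^t|\m{fv}(\cmd))$ and $\m{overapprox}(\fsums, I, \nvmem^t|\m{fv}(\cmd), T)$ hold trivially since the memory is empty, and the stack condition $\m{isTop}(S, f, \cc)$ holds because $S$ bottoms out at $\m{main}$.

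\textbf{Inductive step.} Suppose a prefix $T$ ends in a configuration $(\tau, \nvmem^t, S, \cmd)$ that is $\m{ok}$ with respect to $T$, and the execution steps once more, $(\tau, \nvmem^t, S, \cmd) \SeqStepsto{} (\tau_1, \nvmem^t_1, S_1, \cmd_1)$. Lemma~\ref{lem:one-step} applies directly and yields $\fdecls;\pdecls,\fsums; T\SeqStepsto{}(\tau, \nvmem^t, S, \cmd) \Vdash (\nvmem^t_1, S_1, \cmd_1): \m{ok}$, which is precisely well-formedness of the successor configuration with respect to the extended prefix. Chaining these applications along the whole trace completes the induction.

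\textbf{Main obstacle.} With Lemma~\ref{lem:one-step} in hand, the remaining content lies at two seams. The first is the base case's bridge from the \emph{global} summary check to the \emph{local}, per-configuration check of $\m{main}$'s body, which only goes through if the whole-program judgment $\fdecls;\pdecls,\fsums \Vdash \fsums :\m{ok}$ is defined so as to hand back exactly the per-context derivation the local judgment demands. The second, where I expect the most delicacy, is confirming that the trace-indexed invariant $\m{overapprox}(\fsums, I, \nvmem^t, T)$ is correctly seeded and coherently extended: the static dependence $I$ stores only partial provenance segments while the memory stores timestamps, so the match is mediated by reconstructing call traces via traversal of $\ft{fromTp}$ down to a local input operation. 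That reasoning is encapsulated inside Lemma~\ref{lem:one-step}'s per-rule case analysis, so for the present lemma it suffices to verify that its hypotheses align at each step and that the empty initial memory vacuously establishes the invariant.
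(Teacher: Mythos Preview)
Your proposal is correct and follows exactly the approach the paper takes: induction on the length of the trace, invoking Lemma~\ref{lem:one-step} for the inductive step. The paper's own proof is just the one-line sketch ``By induction over the length of $T$ and use Lemma~\ref{lem:one-step},'' so your write-up is in fact a more detailed elaboration of the same argument.
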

\begin{proof}(sketch)
    By induction over the length of $T$ and use Lemma~\ref{lem:one-step}.
\end{proof}

\section{Correctness Theorem}

\begin{lem}[Policy conservative]\label{lem:policy-cons}
If $\fdecls;\pdecls,\fsums \Vdash \fsums :\m{ok}$, and $T$ is trace
starting from initial state and function $\m{main}$ then any action
that is related to any policy $\pid$ observed on a trace $T$, is
(after replacing time stamps with call chain for inputs) included in
$\pdecls(\pid)$.
\end{lem}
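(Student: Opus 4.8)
The plan is to reduce the statement to the per-step checking guarantees already packaged in the well-formedness invariant, and to extract the inclusion into $\pdecls(\pid)$ directly from the premises of the summary/policy checking rules in Figure~\ref{fig:tmap-checking}. First I would invoke Lemma~\ref{lem:preservation} (trace preservation): since the program is checked $\m{ok}$ and $T$ starts from the initial state at $\m{main}$, every intermediate configuration $(\tau, N^t, S, \cmd)$ on $T$ is well-formed, i.e. $\fdecls;\pdecls,\fsums; T' \Vdash (N^t, S, \cmd):\m{ok}$ for the corresponding prefix $T'$. Unfolding this yields, at each step, a checking derivation $\fdecls;\pdecls,\fsums; \cc; f; M; I \Vdash \cmd : M'; I'$ together with the over-approximation facts $\m{overapprox}(M, N^t|\m{fv}(\cmd))$ and $\m{overapprox}(\fsums, I, N^t|\m{fv}(\cmd), T)$. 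I would then induct on the length of $T$ and case on the semantic rule that fires at the step producing the policy-related observation.

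The interesting cases are the observation-generating rules. For a $\m{fresh}(\m{top}(S), \ell, \ins)$ observation produced by rule \rulename{Fresh}, the premise $\m{eval}(N^t, e) = (v, \ins)$ exposes the runtime taint $\ins$ as a list of input timestamps. The current command must be $\ell{:}\elet{\m{fresh}\;x}{e}{\cmd}$, so its checking derivation used \rulename{Let-fresh}, whose second premise is $\ft{callChain}(\fsums, I(e)) \subseteq \pdecls(\m{fresh}, f, \ell).\m{inputs}$. The taint over-approximation invariant says precisely that the static $I(e)$ over-approximates the runtime taint of $e$ and that the call chain recovered from $I(e)$ via $\fsums$ agrees with the call chain obtained by reading off, for each timestamp in $\ins$, the stack at the configuration at that time on $T$. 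Composing the two facts shows that $\ins$, after replacing each timestamp by its call chain, is contained in $\pdecls(\m{fresh}, f, \ell).\m{inputs}$, which is exactly the claim for this policy. The $\m{cnst}(\m{top}(S), \ell, n, \ins)$ case is identical, using rule \rulename{Let-consistent} and the consistent policy's $\m{inputs}$ field.

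For the $\m{use}(f, \ell, \tau)$ observations, which are emitted by $\ft{getUse}$ inside rules such as \rulename{Call}, \rulename{Ret}, \rulename{Fresh}, and \rulename{Consistent}, I would appeal to the $\ft{checkUse}(\pdecls, e)$ premise carried by the corresponding checking rules (\rulename{Let}, \rulename{Call-nr}, \rulename{Call-r}, \rulename{Ret}, \rulename{Assign}, \rulename{Assign-Ref}). By the side condition defining $\ft{checkUse}$, any free variable of $e$ that is the declaration variable of a fresh policy forces $(f, \ell)$ into that policy's $\m{uses}$ list; hence every emitted use is recorded in the appropriate fresh policy. Assembling the three cases with the induction hypothesis gives that every policy-related action on $T$ lies in $\pdecls(\pid)$ after the timestamp-to-call-chain substitution, while the non-observation-generating steps either leave the policy-relevant state untouched or only extend $I$ in the over-approximating direction.

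The hard part will be the step that converts runtime timestamps to static call chains, i.e. showing that $\ft{callChain}(\fsums, I(e))$ computed statically coincides with the provenance obtained dynamically by inspecting the stack at the moment each input in $\ins$ executed. This correspondence must be threaded through function calls and returns, where provenance is extended by the $\m{retBy}$, $\m{pbr}$, and $\m{argBy}$ tags in rules \rulename{Call-nr}, \rulename{Call-r}, and \rulename{Ret}; it is exactly the content that $\m{overapprox}(\fsums, I, N^t, T)$ is meant to maintain. I would therefore strengthen the statement of that invariant, and correspondingly of Lemma~\ref{lem:one-step}, to record the agreement of recovered call chains across calls and returns, after which the present lemma follows by simply reading the invariant off at each observation point.
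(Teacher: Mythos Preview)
Your proposal is correct and follows essentially the same approach as the paper: invoke Lemma~\ref{lem:preservation} to get well-formedness at every configuration, then read off the policy inclusion from the premises of \rulename{Let-fresh}, \rulename{Let-consistent}, and the $\ft{checkUse}$ side conditions. The one place you diverge is your final paragraph, where you propose strengthening $\m{overapprox}(\fsums, I, N^t, T)$ and Lemma~\ref{lem:one-step} to carry the timestamp-to-call-chain correspondence; in the paper this is not a strengthening but already part of the definition of $\m{overapprox}$, which explicitly relates the call chain recovered from $I$ and $\fsums$ to the stack at the configuration indexed by each timestamp on $T$, so no extra work is needed there.
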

\begin{proof}(sketch)
  This follows from Lemma~\ref{lem:preservation}. For freshness
  policies, if a trace include a declared fresh variable $x$, then the
  trace must include a configuration that has as its command,
  $\elet{\m{fresh}\, x}{e}{\cmd}$. Given the configuration is checked
  $\m{ok}$, by Lemma~\ref{lem:preservation} we know that all the
  inputs that $e$ depends on is a super set as on the trace. Further,
  the checking rule checks those inputs are declared in $\pdecls$.
  Therefore, all the relevant actions as observed on the trace is
  included in the policy spec. The uses of $x$ will be included in the
  policy as the checking rules ensure that all uses of $x$ are
  included in the policy spec.
  
  Similar arguments can be made for  and consistency.
\end{proof}

\begin{thm}
  Given a program $p$ consisting of functions in $\fdecls$,
  $\fdecls;\pdecls,\fsums \Vdash \fsums :\m{ok}$ and
  $\fdecls;\pdecls,\pmap; \m{main}; \emptyset;\cdot \Vdash
  \fdecls(\m{main}) : \m{ok}$, then $p$ satisfies all the specified
  polices.
\end{thm}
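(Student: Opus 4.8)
The plan is to obtain the theorem as the composition of the two soundness lemmas already proved for the two independent checks, with one lemma supplying \emph{coverage} of the policy and the other supplying \emph{enclosure} within a region. Fix an arbitrary policy $\pid \in \dom(\pdecls)$ and an arbitrary taint-augmented continuous trace $T = \tau, \emptyset, \m{main}, \fdecls(\m{main}) \MSeqStepsto{O}$ from the initial configuration. By the definitions of freshness and consistency, it suffices to show: for every segment $T'$ of $T$ delimited as in those definitions (from the earliest relevant input timestamp to, for freshness, the last $\m{use}$ observation, and for consistency, the latest input timestamp of the set), the segment $T'$ is nested within a single $\m{begin\_atom}(\aid)/\m{end\_atom}(\aid)$ pair with no intervening region entry or exit.

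First I would invoke the summary-and-policy hypothesis $\fdecls;\pdecls,\fsums \Vdash \fsums : \m{ok}$. Lemma~\ref{lem:preservation} gives that every configuration along $T$ is well-formed, so Lemma~\ref{lem:policy-cons} applies and yields \textbf{coverage}: after reconstructing call chains from timestamps via $\ft{callChain}$, every policy-$\pid$-relevant action observed on $T$---the anti-dependent inputs recorded by each $\m{fresh}$/$\m{cnst}$ observation, together with every $\m{use}$ observation for a freshness policy---is already listed in $\pdecls(\pid)$. Second I would invoke the region hypothesis $\fdecls;\pdecls,\pmap; \m{main}; \emptyset; \cdot \Vdash \fdecls(\m{main}) : \m{ok}$. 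Lemma~\ref{lem:atomic-check}(1) yields \textbf{enclosure}: every operation listed in $\pdecls$ occurs on $T$ only inside the single region $\aid = \pmap(\pid)$, bracketed by $\m{begin\_atom}(\aid)$ and $\m{end\_atom}(\aid)$. Composing the two, take any boundary-delimited segment $T'$ as above. Its delimiting operations are $\pid$-relevant, hence (coverage) lie in $\pdecls(\pid)$, hence (enclosure) lie inside the same region $\aid$. Because an atomic region is a contiguous span of the trace, containing both endpoints of $T'$ forces the whole of $T'$---including the definition of $y$ and any intermediate dependences---inside that region; the ``no other region operations'' clause is discharged by the single-region placement guaranteed by the check (and by \sys's flattening of nested regions). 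This is exactly the freshness condition; the temporal-consistency case is identical with $T''$ delimited by the extreme timestamps in $\bigcup_i \ins_i$ over the $\m{cnst}$ observations of the set. Since $T$ and $\pid$ were arbitrary, $p$ satisfies all declared policies.

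The main obstacle I anticipate is not the top-level composition but keeping the two lemmas speaking the same language at their interface. The declarations in $\pdecls$ identify inputs by \emph{provenance} (context-sensitive call chains), whereas $T$ identifies them by logical timestamp $\tau$; the bridge is the $\m{overapprox}(\fsums, I, N^t, T)$ clause of configuration well-formedness together with the $\ft{callChain}$ reconstruction, which must be shown to agree with the stack-derived call chain at each input operation. I must ensure that the ``replacing timestamps with call chains'' clause of Lemma~\ref{lem:policy-cons} produces exactly the provenance representation that Lemma~\ref{lem:atomic-check} consumes in its per-call-chain lookup into $\pmap$; a mismatch in how a chain is truncated by the two checks would break the composition even though each lemma is individually sound. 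A secondary subtlety is to keep the two implication directions apart: because the atomic check explores both branches of every $\m{if}$, it over-approximates and therefore supplies only enclosure (``each policy operation that does fire fires inside $\aid$''), while coverage (``each relevant operation that fires is a policy operation'') comes solely from Lemma~\ref{lem:policy-cons}; the definitions demand both the recorded-operations-are-enclosed and the all-relevant-operations-are-recorded directions, so I would verify explicitly that their conjunction yields the two-sided enclosure the definitions require.
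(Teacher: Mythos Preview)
Your proposal is correct and follows essentially the same approach as the paper: invoke Lemma~\ref{lem:policy-cons} (via Lemma~\ref{lem:preservation}) for coverage and Lemma~\ref{lem:atomic-check} for enclosure, then compose the two to conclude that every policy-relevant segment lies within a single atomic region. Your treatment is in fact more explicit than the paper's sketch, particularly in calling out the provenance/timestamp interface between the two lemmas and the directionality of the over-approximation, but the overall structure is the same.
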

\begin{proof}(sketch)
  This follows from Lemma~\ref{lem:policy-cons} and
  Lemma~\ref{lem:atomic-check}. First all operations are covered by
  the policy, then all atomic regions are shown to wrap actions within
  one policy completely.  That is if a program $p'$ pass the check,
  then all input operations that a fresh annotated variable depends
  on, as well as any uses of the variable, will be in the same atomic
  region. Any input operations that any item in a consistent set
  depends on will also be in the same region. As the committed
  execution of a region never experiences a power-failure, the
  committed execution always has the same timing-behaviour as a
  continuous execution.
\end{proof}

\section{Correctness of inference algorithm}

\paragraphb{Input dependence map}
Building the input dependence map largely works as described in the checking rules. 
This area of the analysis is the primary beneficiary of Rust being the target language.
The analysis can assume no mutable globals, and that references are owned --- 
there is only one copy of a mutable reference. These assumptions simplify taint analysis.
Static reasoning about tainted globals and pointers otherwise needs to be strongly conservative. 
In particular, Rust ownership means that if taint is stored into a 
pass-by-reference parameter, that reference is written to, so it won't alias with any other references passed in. 

\paragraphb{Selection of the Goal Function}
\milijana{If we think it's a better way to present the algo, I could write inference rules for this pretty quick}
 the above rules. Basically, the algorithm is recording the nest of functions 
 an operation in the region resides in. 
Selecting the deepest function that still includes everything will pass the check, but with a 
smaller region size than a shallower function. Consider the 
checking rule for functions, rule \rulename{Func}. If there is a command $x := g()$, where $g() = y := f()$, and $\cmd_f$ 
contains the input operations, then 
if $ y := f()$ checks $\m{ok}$, so will $x := g()$.

\paragraphb{Inserting the region}
The start and end of an atomic region are chosen to be points that dominate and post-dominate, respectively, 
all operations that need to be in the same region to pass the check. \milijana{Say something about panicking?}

\section{Formal Semantics of the JIT + Atomics Execution Model}
\label{app:semantics}
A state is of the form $(\timestamp, \context, S, \vmem, \cmd)$.
$\timestamp$ denotes the logical time of the state, $\context$ is the 
saved execution context, $\nvmem$ is the non-volatile memory of the system, $S$ is the stack, and $\cmd$ is the 
command to be executed.
State transitions are of the form $(\timestamp, \context, \nvmem, S, \cmd) 
\stepsto (\timestamp', \context', \nvmem', S', \cmd')$

A context can either be a JIT context $\jctx$ or an atomic context $\actx$. 
Whichever type of context is currently set governs the behaviour on checkpoints, low power 
triggers, and reboots. The JIT context is of the form $\m{jit}(S,\cmd)$,
where $S$ and $\cmd$ are the execution context (stack and command) saved at a checkpoint, and 
the atomic context is of the form $\m{atom}(\ulog, S, \cmd, \depth)$, where 
$S$ and $\cmd$ have the same meaning as for a JIT checkpoint, $\ulog$ is 
the non-volatile data that must be saved, and $\depth$ is a counter for 
the nesting of atomic regions. 

Note that an execution will always begin with $\jctx$, as the context 
is a piece of nonvolatile memory that is statically initialized to refer 
to the beginning of the program $(\cmd_0, S_0)$.

\begin{mathpar}
\inferrule[JIT-LowPower]{ \mt{PowerLow} \\ \m{pick}(n)}{
  (\timestamp,\jctx, \nvmem, S, \cmd) 
 \Stepsto{} (\timestamp+1, \m{jit}(S, \cmd), \nvmem, S, \m{reboot}(n)) 
}
\and
\inferrule[Atom-LowPower]{\mt{PowerLow} \\ \m{pick}(n)}{
  (\timestamp,\actx, \nvmem, S, \cmd) 
 \Stepsto{}  (\timestamp+1, \actx, \nvmem, S, \m{reboot}(n)) 
}
\and 
\inferrule[JIT-Reboot]{ \jctx = \m{jit}(S, \cmd)}{
(\timestamp, \jctx, \nvmem, S', \m{reboot}(n)) 
  \Stepsto{}  ( \timestamp+n, \jctx, \nvmem, S, \cmd) 
}

\and 
\inferrule[Atom-Reboot]{ }{
(\timestamp, \m{atom}(\ulog, S, \cmd, \depth), \nvmem, S', \m{reboot}(n)) 
\Stepsto{} \\(\timestamp+n,  \m{atom}(\ulog, S, \cmd, 0), 
\nvmem \lhd \ulog, S, \cmd) 
}
\end{mathpar} 
\paragraphb{Power Failures and Reboots} 
On receiving a low power signal, a system in JIT mode 
saves the current volatile memory and command to the context and transitions 
to the reboot command (rule \rulename{JIT-LowPower}). If the system was in Atomic
 mode, however, it immediately 
transitions to reboot (rule \rulename{Atom-LowPower}). In both cases, the reboot command is parameterized with 
an $n$ picked at random.  

If a reboot command executes in JIT mode, then the system updates the command and stack with 
the execution state stored in the context and continues executing the command. If the system 
reboots when in atomic mode, it applies the undo log to nv memory, updates the 
command and stack with the values from the 
context, and sets the 
atomic depth counter to zero. This update ensures that the atomic depth counter 
will remain consistent upon re-execution of any nested atomic starts and ends. 
In both cases, the timestamp $\timestamp$ is updated with the picked n. 
This randomly large update to $n$ captures how power can be off 
for an arbitrary period of time. Continuing a sequence of input operations on 
the intermittent system with the new timestamp may not match the desirable 
behaviour on the continuously powered system.

\begin{mathpar}
\inferrule[Atom-Start-Outer]{\actx = \m{atom}(\nvmem |_{\omega},S, \cmd, 0)  }{
  (\timestamp, \jctx, \nvmem, S, \astart{\omega};\cmd) 
  \Stepsto{}  (\timestamp+1, \actx, \nvmem, S, \cmd) 
}
\and

\inferrule[Atom-Start-Inner]{ }{
  (\timestamp, \actx, \nvmem, S, \astart{\omega};\cmd) 
  \Stepsto{} \\ (\timestamp+1, \actx[\depth \leftarrow \depth+1], \nvmem, S, \cmd) 
}
\and
\inferrule[Atom-End-Outer]{ \actx = \m{atom}(\nvmem_a,S_a, \cmd_a, 0)
\\ \jctx = \m{jit}(\emptyset, \emptyset)}{
  (\timestamp, \actx, \nvmem, S, \aend;\cmd) 
 \Stepsto{} (\timestamp+1, \jctx, \nvmem, S, \cmd) 
}
\and 
\inferrule[Atom-End-Inner]{\actx = \m{atom}(\nvmem_a,S_a, \cmd_a, \depth)\\ \depth > 0 }{
  (\timestamp, \actx, \nvmem, S, \cmd) 
 \Stepsto{}  (\timestamp+1, \actx[\depth \leftarrow \depth -1], \nvmem, S, \cmd) 
}
\end{mathpar}
\paragraphb{Atomic region transitions}

Rule \rulename{Atomic-Start-Outer} describes the behaviour when 
the system transitions into Atomic mode from JIT mode. 
The context is
  switched to an undo log checkpoint containing the command to be executed $\cmd$, 
  the (often volatile) stack memory $S$, the nonvolatile data 
  that must be saved based on the potentially inconsistent variables of the 
  region ($\nvmem|_{\omega}$), and an atomic depth counter $\depth$, set to 0. 
  If the system encounters another atomic region start while already in atomic 
  mode (rule \rulename{Atomic-Start-Inner}), then that counter is incremented, but 
  otherwise the command is a no-op. If the counter is greater than zero when 
  encountering an $\aend$ command, then the system similarly decrements the depth counter 
  but otherwise does nothing (\rulename{Atomic-End-Inner}). If the counter 
  is zero, and the system is ending the outermost atomic region (\rulename{Atomic-End-Outer}), 
  then the system switches the context to an empty JIT context. This behaviour 
  is safe as if the system experiences a low power signal, then the system 
  will populate the JIT checkpoint.


\section{Building an Input Map}
\label{app:taint-tracking}
We show \sys's algorithm for building an input map.
It starts by computing a static taint analysis of any input operations, 
building a map of a variable definition to the call chain of any input operation 
on which it depends. We show 
the pseudo-code in Algorithm~\ref{alg:buildmap}. The top-level algorithm in inter-procedural, using 
the function \rulename{Track} to compute taint propagation within a single function. 
\rulename{Track} takes as input four parameters: \alg{currInst}, which is the instruction from 
which to start propagating taint, \alg{iOp}, which is the source of taint into the current 
function, \alg{tMap}, the taint Map being built, and \alg{caller}, which has a value only 
if taint was passed in from a calling function. The algorithm for local taint propagation is 
standard and is omitted for space. The key features are a) it inserts any definitions that 
are data or control dependent on \alg{iOp} into the taint map, b) it is context-sensitive, 
propagating taint to all callers if taint was generated within the local function, but only 
to \alg{caller} if taint was passed in, and c) it takes advantage of Rust's mutability to know 
that a written-to pointer cannot alias with any other pointers.
\rulename{Track} returns a summary of taint 
propagation within the local function, namely how taint was passed \emph{in} to the local function, 
how it propagated \emph{out}, and what \emph{type} of propagation it was. 
 We explain 
the possible types as we step through the top-level function \rulename{buildInputDeps.}

\begin{algorithmic}[1]
  \footnotesize
  \Function{buildInputDeps}{\alg{ Cmd}}
  
  \State \alg{inputs \gets Cmd.findInputInsts()}

  \ForAll{\alg{iOp \in inputs}}
  
  \State \alg{first \gets Summary\{in:null, out:iOp, type:input\}}
  \State \alg{toExplore.append(first)}
  
  \While{\alg{s \gets toExplore.next()}}
  
  \If{\alg{s.type == input}}
  
  \State \alg{ipFlow \gets track(s.out, s.out, tntMap, null)} \Comment{start from the input}
  \ElsIf{\alg{s.type = return}}
  \State \alg{ipFlow \gets track(s.out, s.out, tntMap, null)}\Comment{ret is now taint src}
  \ElsIf{\alg{s.type = passbyref}}
  \State \alg{tSrc \gets s.out.call} \Comment{callinst is now taint src}
  \State \alg{start \gets s.out.next}\Comment{start from next use of ref}
  \State \alg{ipFlow \gets track(start, tSrc, tntMap, null)}
  
    \ElsIf{\alg{s.type = argument}}

    \State \alg{cllr \gets s.out.call}\Comment{only this context is tainted}
    
    \State \alg{ipFlow \gets track(s.out, s.in, tntMap, cllr)}\Comment{use cllrs taint source}
    \EndIf
    \State \alg{\forall nxt \in ipFlow, tntMap[nxt.out] \gets nxt.in}
  \State \alg{toExplore.add(ipFlow)}
  \EndWhile
  \EndFor
  \State\Return{\alg{tntMap}}
  \EndFunction

  \end{algorithmic}
  \label{alg:buildmap}
  \captionof{algorithm}{Build an input-dependence map}

\rulename{BuildInputDeps} starts by finding any calls to input functions, which were indicated by 
the programmer. It then computes taint propagation for each call. 
The inter-procedural taint propagation summaries are 
stored into the list \alg{ipFlow}. Taint can propagate between functions if it 
is stored into a \alg{return} instruction, stored into a \alg{pass-by-reference} parameter, 
or if it is used as an \alg{argument}. Lines 9-10 describe the return case.
If taint is returned into a function $F$, then the algorithm treats the callsite 
as both the next source of taint as well as the starting point for calling \rulename{Track} on $F$. 
If taint is propagated to $F$ through pass-by-ref (lines 11-14), the callsite 
is similarly the source of taint in $F$, but taint propagation starts from 
the next use of that reference. If taint is propagated into a called function $F1$ (lines 15-17), 
then the tainted source is the same source as that of the caller $F$, and taint propagation 
starts from uses of the argument in $F1$. The \alg{caller} parameter is set 
to $F$, so that \rulename{Track} propagates taint context-sensitively.
At line 19, the algorithm
adds an entry into \alg{tntMap} for each flow, mapping the out edge to the tainted source. This 
map structure allows call chains of a tainted flow to be retrieved. 
For example, if a function $F_{in}$ calls an 
input operation $x := \mt{IN()}$ and returns it to $F2$, the entry 
$[ret := call F_{in}] \mapsto x := IN()$ will 
be added to the map. If F2 stores the returned value into a variable $y$, the entry 
$[y := ret] \mapsto ret := call F_{in}$ will be added to the map. Chaining map lookups 
will retrieve the entire sequence from the definition to the original input operation.

\end{document}